\newcommand{\version}{1-09-2017}
\title{String chopping and time-ordered products of linear
string-localized quantum fields} 
\author{
Lucas T. Cardoso,\thanks{Supported by CAPES.}\quad
Jens Mund\thanks{Supported by CNPq.
}\\[6pt]
{\small Departamento de F\'isica,
Universidade Federal de Juiz de Fora,}\\
{\small 36036--900 Juiz de Fora, MG, Brasil}\\[12pt]
Joseph C.~V\'arilly
\\[6pt]
{\small Escuela de Matem\'atica, Universidad de Costa Rica,
San Jos\'e 11501, Costa Rica}\\[3pt]
}
\date{\version \\[3ex]} 
\newcounter{P}         
\renewcommand{\theP}{(P\arabic{P})} 
\DeclareMathOperator{\linspan}{span} 
\DeclareMathOperator{\sgn}{sgn}     
\DeclareMathOperator{\T}{T}         
\newcommand{\RR}{\mathbb{R}}        
\newcommand{\later}{\succcurlyeq}   
\newcommand{\earlier}{\preccurlyeq} 
\newtheorem{prop}{Proposition}[section]
\newtheorem{lema}{Lemma}[section]
\theoremstyle{definition}
\newtheorem{defn}{Definition}[section]
\theoremstyle{remark}
\newtheorem{remk}{Remark}[section]
\newtheorem*{claim}{Claim}
\numberwithin{equation}{section} 
\newcommand{\scalprod}[2]{(#1\mathbin{,}#2)} 
\newcommand{\Scalprod}[2]{\bigl(#1\mathbin{,}#2\bigr)} 
\newcommand{\calD}{\mathcal{D}}
\newcommand{\calG}{\mathcal{G}}     
\newcommand{\calL}{\mathcal{L}}
\newcommand{\calM}{\mathcal{M}}     
\newcommand{\eps}{\varepsilon}
\newcommand{\half}{\tfrac{1}{2}}
\newcommand{\w}{\wedge}               
\newcommand{\x}{\times}               
\newcommand{\xx}{\boldsymbol{x}}      
\newcommand{\xyx}{\times\cdots\times} 
\newcommand{\hideqed}{\renewcommand{\qed}{}} 
\newcommand{\set}[1]{\{\,#1\,\}}      
\newcommand{\word}[1]{\quad\text{#1}\quad} 
\newcommand{\Spd}{H}                
\renewcommand{\lor}{\Lambda}        
\newcommand{\pt}{{\mathrm{p}}}
\newcommand{\field}{\varphi}        
\newcommand{\fieldPt}{\field_\pt}   
\newcommand{\fieldSt}{\field} 
\newcommand{\APt}{A^\pt}            
\newcommand{\ASt}{A}                
\newcommand{\Wickli}{\mathopen{:}}
\newcommand{\Wickre}{\mathclose{:}}
\newcommand{\Sfin}{S^\mathrm{fin}}  
\newcommand{\Sinfty}{S^\infty}      
\newcommand{\String}[1]{S_{#1}}     
\newcommand{\Eext}{E_{\mathrm{ext}}}  
\newcommand{\Eint}{E_{\mathrm{int}}}  
\newcommand{\LargeDiag}{\Delta_n}     
\newcommand{\Vext}{I_{\mathrm{ext}}}  
\begin{document} 

\maketitle 

\begin{abstract}
For a renormalizability proof of perturbative models in the
Epstein--Glaser scheme with string-localized quantum fields, one needs
to know what freedom one has in the definition of time-ordered
products of the interaction Lagrangian. This paper provides a first
step in that direction.

The basic issue is the presence of an open set of $n$-tuples of
strings which cannot be chronologically ordered. We resolve it by
showing that almost all such string configurations can be dissected
into finitely many pieces which can indeed be chronologically ordered.
This fixes the time-ordered products of linear field factors outside a
nullset of string configurations. (The extension across the nullset,
as well as the definition of time-ordered products of Wick monomials,
will be discussed elsewhere.)
\end{abstract}

\section{Introduction} 
\label{sec:Intro}

The three pillars of relativistic quantum field theory (QFT) are
positivity of states, positivity of the energy and locality of
observables (or Einstein causality). Any attempt to reconcile them
leads to the well-known singular behaviour of quantum fields at short
distances (UV singularities)~\cite{Strocchi93}, which becomes worse
with increasing spin. This rules out the direct construction of
interacting models for particles with spin/helicity $s \geq 1$ in a
frame which incorporates the three principles from the beginning.

The usual way out is gauge theory (GT), where one relaxes the
principle of positivity of states in a first step, and divides out the
unphysical degrees of freedom (negative norm states and ghost fields)
at the end of the construction. This approach has been extremely
successful and is the basis of the Standard Model of elementary
particle physics. However, it has some shortcomings: the intermediate
use of unphysical degrees of freedom does not comply well with
Ockham's razor; the approach does not provide a direct construction of
charge-carrying physical fields; it excludes an energy-momentum tensor
for massless higher helicity particles~\cite{WeinbergWitten}. Finally,
many features of models must be put in by hand instead of being
explained, like for example the shape of the Higgs potential, and
chirality of the weak interactions.

There is an alternative, relatively recent but conservative approach
\cite{MSY,MundOliveira,Bert_BeyondGauge,PlaschkeYngvason}, which keeps
positivity of states and instead relaxes the localization properties
of (unobservable) quantum fields: These fields are not point-local,
but instead are localized on Mandelstam strings extending to spacelike
infinity~\cite{MSY,MSY2}. Such a string, not to be confused with the
strings of string theory, is a ray emanating from an event $x$ in
Minkowski space in a spacelike%
\footnote{\label{fn:lightlike}%
The choice of \emph{space}-like strings is motivated by the known fact
that in every massive model charge-carrying field operators are
localizable in spacelike cones~\cite{BuF}. It seems, however, that our
constructions go through also for \emph{light}-like strings, replacing
$\Spd$ by the forward light cone.}
direction $e$,
\begin{equation}
\label{eqSxe} 
\String{x,e} \doteq x + \RR_0^+ e. 
\end{equation}
Our quantum fields are operator-valued distributions $\field(x,e)$, 
where $x$ is in Minkowski space and $e$ lies in the manifold of
spacelike directions 
\begin{equation}
\label{eqSpd} 
\Spd \doteq \set{ e \in \RR^4 : e \cdot e = -1}.
\end{equation}  
The field $\fieldSt(x,e)$ is localized on the string
$\String{x,e}$ in the sense of compatibility of quantum observables:
If the strings $\String{x,e}$ and $\String{x',e'}$ are spacelike
separated,%
\footnote{Indeed, the distributional character of the fields requires
that $\String{x',e''}$ be spacelike separated from $\String{x,e}$
for all $e''$ in an open neighborhood of $e'$.}
then 
\begin{equation} 
\label{eqFieldLoc} 
[\field(x,e), \field(x',e')] = 0.
\end{equation}
It has been shown~\cite{BuF} that in the massive case this is the
worst possible ``non-locality'' for unobservable fields which is
consistent with the three mentioned principles (in particular with
locality of the observables), and that this weak type of localization
still permits the construction of scattering states.
Free string-localized fields for any spin with good UV behaviour have
been constructed in a Hilbert space without ghosts. Among these are
string-localized fields which differ from their badly behaved
point-localized counterparts by a gradient
\cite{MundOliveira,PlaschkeYngvason}. They allow for
the construction of string-localized energy-momentum tensors for any
helicity~\cite{MundRehrenSchroer_letter}, evading the Weinberg--Witten
theorem~\cite{WeinbergWitten}. 
In the (perturbative) construction of interacting models, one uses an
interaction Lagrangian which differs from a point-localized
counterpart by a divergence. Then the classical action is the same for
both Lagrangians. (This is analogous to gauge theory, where two
Lagrangians in different gauges yield the same action.) 

The requirement that this equivalence survive at the quantum level
leads to renormalization conditions which we call \emph{string
independence (SI)} conditions, reminiscent of the Ward identities in
gauge theory. They are quite restrictive: In particular, they imply
features like chirality of weak interactions~\cite{Rosalind}, the
shape of the Higgs potential~\cite{MundSchroer_AbHiggs} and the Lie
algebra structure in models with self-interacting vector
bosons~\cite{Bert_BeyondGauge}. It is not clear at the moment if this
approach leads to the same models as the gauge-theoretic one.

A proof of renormalizability at all orders in this approach 
is missing, up to now. 
The present paper is meant as a first step in this direction. We aim
at the perturbative construction of interacting models within the
Epstein--Glaser scheme~\cite{EG}. This approach is based on the Dyson
series expansion of the $S$-matrix in terms of time-ordered products
of the interaction Lagrangian, which is a Wick polynomial in the free
fields. In the case of point-localized fields, renormalizability
enters as follows. The time-ordered products of $n$ Wick monomials
$W_i$ are basically characterized by symmetry and the factorization
property, namely
\begin{equation} 
\label{eqTPt} 
\T W_1(x_1)\cdots W_n(x_n) 
= \T W_1(x_1)\cdots W_k(x_k) \, \T W_{k+1}(x_{k+1})\cdots W_n(x_n) 
\end{equation}
whenever each event in $\{x_1,\dots,x_k\}$ is ``later'' than each
event in $\{x_{k+1},\dots,x_n\}$. (We say that 
\emph{$x$ is later than~$y$} if there is a reference frame such that
$x^0 > y^0$.) Indeed, these properties recursively fix the $T$-products
off the thin diagonal 
$\set{(x_1,\dots,x_n) \in \RR^{4n} : x_1 =\cdots= x_n}$ (or
equivalently, by translation invariance, outside the origin
of~$\RR^{4(n-1)}$): see~\cite{EG,BruFred00}. In this $x$-space
approach, the ``UV~problem'' of divergences consists in the extension
across the origin, which is not unique: At every order $n$ one has a
certain number of free parameters. If the short-distance scaling
dimension of the interaction Lagrangian is not larger than~$4$, then
this number does not increase with the order, and one can fix all free
parameters by a finite set of normalization conditions: the model is
renormalizable~\cite{EG}, see~\cite{CardosoRenormPoint} for a review
of the argument.

In the present paper, we initiate the corresponding discussion for
string-localized quantum fields $\field(x,e) $ by considering
time-ordered products of \emph{linear} fields 
$\T\field(x_1,e_1)\cdots\field(x_n,e_n)$. 
(The case of Wick monomials of order $>1$ is left for a future
investigation.) These are required to be symmetric and to satisfy the
factorization property, namely Eq.~\eqref{eqTPt} must hold, with
$W_i(x_i)$ replaced by $\fieldSt(x_i,e_i)$, whenever each of the first
$k$ strings is later than each of the last $n-k$ strings.

The basic problem, already present at order~$2$, is that two strings
generically are not comparable in the sense of time-ordering. In fact,
there is an open set of pairs $(x,e),(x',e')$ corresponding to strings
which are not comparable, see Lemma~\ref{R>S}. Thus the $T$-product of
two fields is undefined on an open set, which leaves an infinity of
possible definitions instead of finitely many parameters already at
second order, jeopardizing renormalizability. For three and more
strings, the problem becomes worse, see Fig.~\ref{fg:big-three} for a
typical example.

\begin{figure}[htb]
\centering
\begin{tikzpicture}[>=Stealth]
\coordinate (A) at (-1,0) ; 
\coordinate (B) at (2,0) ;
\coordinate (C) at (1,2.2) ;
\coordinate (X) at  ($ (A)!-0.3!(B) $) ;
\coordinate (Xa) at ($ (A)!-0.05!(B) $) ;
\coordinate (Xb) at ($ (A)!0.05!(B) $) ;
\coordinate (Xl) at ($ (A)!0.6!(B) $) ;
\coordinate (Xe) at ($ (A)!1.3!(B) $) ;
\coordinate (Y) at  ($ (B)!-0.3!(C) $) ;
\coordinate (Ya) at ($ (B)!-0.05!(C) $) ;
\coordinate (Yb) at ($ (B)!0.05!(C) $) ;
\coordinate (Yl) at ($ (B)!0.5!(C) $) ;
\coordinate (Ye) at ($ (B)!1.3!(C) $) ;
\coordinate (Z) at  ($ (C)!-0.3!(A) $) ;
\coordinate (Za) at ($ (C)!-0.05!(A) $) ;
\coordinate (Zb) at ($ (C)!0.05!(A) $) ;
\coordinate (Zl) at ($ (C)!0.6!(A) $) ;
\coordinate (Ze) at ($ (C)!1.3!(A) $) ;
\draw (X) -- (Xa)  (Y) -- (Ya)  (Z) -- (Za) ;
\draw[->] (Xb) -- (Xl) node[above=3pt] {$S_1$} -- (Xe) ;
\draw[->] (Yb) -- (Yl) node[above right] {$S_2$} -- (Ye) ;
\draw[->] (Zb) -- (Zl) node[above left] {$S_3$} -- (Ze) ;
\foreach \pt in {X,Y,Z} \fill (\pt) circle (1.5pt) ;
\end{tikzpicture} 
\caption{Three strings, none of which is later than the other two
(in three-dimensional spacetime -- 
the time arrow points out of the viewing plane).}
\label{fg:big-three} 
\end{figure}
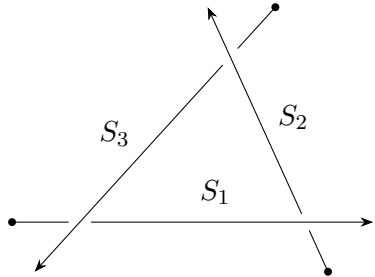

To overcome this problem, we prove first that $n$ strings which do not
touch each other can be chopped up into finitely many pieces which are
mutually comparable. This is our main result. It is shown first for
$n = 2$ in a constructive way (Prop.~\ref{StringChopping2}), and then
for $n > 2$ with a non-constructive proof
(Prop.~\ref{StringChopping}).

We then proceed to show how this purely geometric result fixes the
time-ordered products $\T\field(x_1,e_1)\cdots\field(x_n,e_n)$ outside
the nullset $\LargeDiag$ of strings that touch each other. In
particular, they turn out to satisfy Wick's expansion. Again, this is
first shown for $n = 2$ (Prop.~\ref{T2Diag}), where the product
$\T\field\field$ is fixed by its vacuum expectation value (the Feynman
propagator), and then for $n > 2$ (Prop.~\ref{TWickLinearDiag}).

In the extension of the $T$-products across $\LargeDiag$, the scaling
degrees~\cite{SchulzPhD} of the Feynman propagator with respect to 
the various submanifolds of~$\Delta_2$ 
(in view of Wick's theorem) have to be compared with the respective
codimensions. We give an example in Appendix~\ref{app:IntegrateT}, but
leave the general discussion open for future publications.

The article is organized as follows. Section~\ref{sec:Geometry} is
concerned with geometry: We define the time-ordering prescription for
strings, i.e., the ``later'' relation, and prove our main geometrical
result on the chopping of strings. In
Section~\ref{sec:TimeOrderedProd}, we state the axioms for the
time-ordered product of string-localized fields, and show that (in the
case of linear factors) it is fixed outside the set $\LargeDiag$ and
satisfies Wick's expansion. In Section~\ref{sec:Outlook}, we comment
on a problem that arises in extending the present results to Wick
monomials of order~$> 1$.

We close the introduction with some further details. 
Our fields are covariant under a unitary representation $U$
of the proper orthochronous Poincar\'e group: 
\begin{equation}  
\label{eqCovTens} 
U(a,\Lambda)\, \field(x,e) \, U(a,\Lambda)^{-1}
= \field(a + \Lambda x, \Lambda e), 
\end{equation}
where $a \in \RR^4$ is a translation and $\lor$ is a Lorentz
transformation. 
(This is the scalar case, which we consider here for sake of
notational convenience. The fields may have vector or tensor indices
which also transform, see~\cite{MundOliveira}.)
The irreducible sub-representations of $U$ correspond to the particle
types described by $\field$.%
\footnote{Such fields exist for any spin/helicity
\cite{MSY2,PlaschkeYngvason}.}
We consider here only the case of bosons, and we exclude explicitly
the case of Wigner's massless ``infinite spin''
particles~\cite{Wig48}. It has been shown in~\cite{MSY2} that then our
string-localized free massive field $\field(x,e)$ is of the form
\begin{equation} 
\label{eqFieldSt} 
\field(x,e) = \int_0^\infty ds\, u(s)\, \fieldPt(x + se),
\end{equation}
where $\fieldPt$ is some point-localized free field, and $u$ is some
real-valued function with support in the positive reals.

Of course, one might define the time-ordered product
$\T\field(x,e)\field(x',e')$ by first taking the point-local one and
then integrating, as in Eq.~\eqref{eqTInt}. (Our Props.~\ref{T2Diag}
and~\ref{TWickLinearDiag} may be obtained this way.) However, when it
comes to renormalization (or extension), this procedure misses the
central point of our approach: The point-local Feynman propagator for
higher spin fields (or derivatives of scalar fields) is not unique due
to its bad UV behaviour, and leaves the freedom of adding delta
function renormalizations. This freedom is not undone by the
subsequent integrations. On the other hand, the UV behaviour of
$\field$ is better than that of the point-local field $\fieldPt$ just
due to the integration~\cite{MundOliveira}, and therefore in general
the $T$ product has less freedom. We give an example in
Appendix~\ref{app:IntegrateT}. We conclude that it is worthwhile to
take the string-localized $\field$ seriously as the basic building
block (and not to overburden the $T$-product by continuity assumptions
permitting exchange of integration and time ordering).

\section{Geometric time-ordering} 
\label{sec:Geometry} 

In a given Lorentz frame $\{e_{(0)},e_{(1)},e_{(2)},e_{(3)}\}$, the time
coordinate of an event $x$ is just $x \cdot e_{(0)}$, and an event $x$
occurs ``later'' than an event $y$ in this frame if
$(x - y) \cdot e_{(0)} > 0$.%
\footnote{We adopt the convention that the metric has signature
$(1,-1,-1,-1)$.}
We therefore say that $x$ is later than $y$ if there is some timelike
future-pointing vector $u$ such that $(x - y) \cdot u > 0$. As a
direct consequence of Lemma~\ref{VSelfdual}, this is equivalent to the
condition that $x$ be outside the closed backward light cone of~$y$.
(For general geometric definitions and conventions that will be used
throughout the rest of this work, see Appendix~\ref{app:BasicGeo}.) We
take this as a definition.

\begin{defn}[Posteriority relation] 
\label{df:Posteriority}
For $x,y \in \RR^4$ we say that $x$ is \emph{later than} $y$, in
symbols $x \later y$, if $x$ is not contained in the past light cone
of~$y$:
\begin{equation} 
\label{eqTOGeo} 
x \later y  :\Longleftrightarrow  x \notin \overline{V_-(y)}.
\end{equation}
For subsets $R,S \subset \RR^4$, we say that $R$ is later than $S$,
symbolically $R \later S$, if all points in $R$ are later than all
points in $S$. If either $R \later S$ or $S \later R$, we say $R$ and
$S$ are \emph{comparable}; otherwise, we say $R$ and $S$ are
\emph{incomparable}.
\end{defn}

(For a point $x \in \calM$ and for $R \subset \calM$ we write simply
$x \later R$ instead of $\{x\} \later R$.)

Two warnings are in order.
Physically, the \emph{posteriority} relation $x \later y$ must be
distinguished from the \emph{causality} relation 
$x \in \overline{V_+(y)}$,
which means that $y$ can influence the event $x$ either by way of the
propagation of some material phenomenon or some electromagnetic
effect. Mathematically, posteriority is not an order relation: 
it is not a total order, since not every pair of subsets is
comparable; nor is it even a partial order, since
it is not transitive. Fig.~\ref{fg:big-three} shows a counterexample
to transitivity in $n \geq 3$ dimensions: there holds 
$S_1 \later S_2$ and $S_2 \later S_3$, but \emph{not} 
$S_1 \later S_3$. (Thus, if we write $S_1 \later S_2 \later S_3$ in
the sequel, then this means only that the first two of these relations
hold.)

\subsection{Generalities on the posteriority relation} 
\label{ssc:post-hoc}

We establish some properties of this time-ordering relation which are
relevant for the proof of Propositions~\ref{StringChopping2} and
\ref{StringChopping}. First, note that for two regions $R,S$ in
Minkowski space we have
\begin{equation}
\label{eqR>S} 
R \later S  \iff  R \cap \overline{V_-(S)} = \emptyset.
\end{equation}

\begin{lema} 
\label{R>S} 
Let $R,S \subset \RR^4$. 
\begin{enumerate}
\item[\textup{(i)}]
Both $R \later S$ and $S \later R$ hold if and only if $R$ and $S$ are
spacelike separated.
\item[\textup{(ii)}]
$R$ and $S$ are incomparable if and only if both
$R \cap \overline{V_-(S)} \neq \emptyset$ and
$R \cap \overline{V_+(S)} \neq \emptyset$.
\end{enumerate}
\end{lema}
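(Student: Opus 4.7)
The plan is to reduce both parts directly to the characterization in~\eqref{eqR>S}, combined with the elementary observation that, for any points $x,y \in \RR^4$, one has $y \in \overline{V_-(x)}$ if and only if $x \in \overline{V_+(y)}$, so that by taking unions over $S$,
\[
    S \cap \overline{V_-(R)} \neq \emptyset
    \;\Longleftrightarrow\;
    R \cap \overline{V_+(S)} \neq \emptyset.
\]
Granted this symmetry, the whole argument becomes a matter of bookkeeping.

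For part~(ii), I would just note that $R$ and $S$ are incomparable iff both $R \later S$ and $S \later R$ fail. By \eqref{eqR>S}, failure of $R \later S$ is precisely $R \cap \overline{V_-(S)} \neq \emptyset$, and failure of $S \later R$ is precisely $S \cap \overline{V_-(R)} \neq \emptyset$, which by the symmetry above equals $R \cap \overline{V_+(S)} \neq \emptyset$. Conjoining these gives the stated equivalence.

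For part~(i), I would use \eqref{eqR>S} twice: $R \later S$ and $S \later R$ hold together iff
\[
    R \cap \bigl( \overline{V_-(S)} \cup \overline{V_+(S)} \bigr) = \emptyset,
\]
after invoking the same symmetry to rewrite $S \cap \overline{V_-(R)} = \emptyset$ as $R \cap \overline{V_+(S)} = \emptyset$. This condition says exactly that for every $x \in R$ and every $y \in S$ the vector $x - y$ lies in neither closed light cone, i.e., $(x-y)\cdot(x-y) < 0$; this is the definition of spacelike separation between $R$ and $S$. (I would invoke Lemma~\ref{VSelfdual}, cited in the text right before Definition~\ref{df:Posteriority}, to justify that the complement of $\overline{V_-} \cup \overline{V_+}$ is the set of strictly spacelike vectors.)

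There is no real obstacle here: the whole lemma unpacks into set-theoretic manipulations, and the only potentially confusing point is keeping track of the two different asymmetric conditions $R \later S$ vs.\ $S \later R$ and seeing that they correspond respectively to $R$ being disjoint from the past and from the future of~$S$. I would be careful to record the symmetry $S \cap \overline{V_-(R)} = \emptyset \iff R \cap \overline{V_+(S)} = \emptyset$ explicitly at the start, since both parts (i) and (ii) depend on it.
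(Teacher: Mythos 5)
Your proposal is correct and follows essentially the same route as the paper: both rest on the symmetry $R \cap \overline{V_-(S)} = \emptyset \iff S \cap \overline{V_+(R)} = \emptyset$ together with Eq.~\eqref{eqR>S}, and both reduce part~(i) to disjointness from the union of the closed forward and backward cones (the paper merely collects the conditions on $S$ relative to $R$ instead of on $R$ relative to $S$, which is immaterial). Part~(ii) is handled identically, by negating both one-sided conditions.
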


\begin{proof}
Note first that  $R \cap \overline{V_-(S)} = \emptyset$
is equivalent to $S \cap \overline{V_+(R)} = \emptyset$.
Thus, the condition $(S \later R) \w (R \later S)$ is equivalent, by
Eq.~\eqref{eqR>S}, to $S \cap \overline{V_-(R)} = \emptyset$
and $S \cap \overline{V_+(R)} = \emptyset$. But this is
$S \cap \bigl( \overline{V_+(R)}\cup\overline{V_-(R)} \bigr)
= \emptyset$, which means just that $S$ is spacelike separated
from~$R$. This proves~(i). 
Item~(ii) is a direct consequence of Eq.~\eqref{eqR>S}. 
\end{proof}

\begin{lema} 
\label{Hyperplane} 
Let $\Sigma$ be a spacelike hyperplane of the form 
$\Sigma = a + u^\perp$, where $u$ is a future-pointing timelike vector
and $a \in \Sigma$. Any event $x \in \RR^4$ satisfies
$x \later \Sigma$ iff $(x - a) \cdot u > 0$, that is, $x$
is ``above''~$\Sigma$.
\end{lema}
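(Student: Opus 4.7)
The plan is to unfold the two sides of the ``iff'' using the definition of the posteriority relation, so that both reduce to straightforward statements about the inner product with $u$.

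Recall that $x \later \Sigma$ means $x \later y$ for every $y \in \Sigma$, which (by the paper's definition just before Def.~\ref{df:Posteriority}) is equivalent to the existence of some future-pointing timelike vector $v$ with $(x-y)\cdot v > 0$. For the implication $(x-a)\cdot u > 0 \Rightarrow x \later \Sigma$, I would simply take this distinguished $u$ itself as the witness. Parametrize an arbitrary $y \in \Sigma$ as $y = a + w$ with $w \in u^\perp$; then $(x - y)\cdot u = (x - a)\cdot u - w\cdot u = (x-a)\cdot u > 0$, so $x \later y$ for every $y \in \Sigma$.

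For the converse, I would argue by contrapositive: assuming $(x-a)\cdot u \leq 0$, I will exhibit a single point $y_* \in \Sigma$ such that $x \in \overline{V_-(y_*)}$, which shows that $x$ is not later than $\Sigma$. The natural choice is the orthogonal projection of $x$ onto $\Sigma$ along the $u$-direction,
\begin{equation*}
y_* := x - \frac{(x-a)\cdot u}{u \cdot u}\, u.
\end{equation*}
A direct computation gives $(y_* - a)\cdot u = 0$, so $y_* \in \Sigma$, and
\begin{equation*}
y_* - x = -\frac{(x-a)\cdot u}{u\cdot u}\, u
\end{equation*}
is a nonnegative multiple of the future-pointing timelike vector $u$ (using $u\cdot u > 0$ and the standing hypothesis $(x-a)\cdot u \leq 0$). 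Hence $y_* - x \in \overline{V_+(0)}$, i.e. $x \in \overline{V_-(y_*)}$, so $x$ is not later than $y_*$.

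There is no real obstacle here: the geometric content is captured entirely by the orthogonal-projection formula, and once $y_*$ is in hand both directions are immediate from the characterization of the posteriority relation via the inner product with a future timelike vector (this characterization relies on self-duality of the closed forward cone, i.e.\ Lemma~\ref{VSelfdual} cited in the paper). The only minor point to be attentive to is the equality case $(x-a)\cdot u = 0$, where $y_* = x$ and $x \in \Sigma \subset \overline{V_-(x)}$, so the conclusion still holds.
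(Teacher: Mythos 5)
Your proof is correct. The forward direction coincides with the paper's: take $u$ itself as the witness and compute $(x-y)\cdot u = (x-a)\cdot u$ for $y \in \Sigma$, invoking the self-duality of the light cone (Lemma~\ref{VSelfdual}). For the converse, however, you take a genuinely different and arguably cleaner route. The paper also argues by contrapositive but splits into two cases: if $(x-a)\cdot u = 0$ then $x \in \Sigma$, so $x$ is not later than itself; if $(x-a)\cdot u < 0$ the forward computation gives $x \earlier \Sigma$, and the paper then appeals to Lemma~\ref{R>S}(i) together with the fact that the causal complement of a spacelike hyperplane is empty to rule out $x \later \Sigma$. You instead exhibit a single explicit witness, the orthogonal projection $y_* = x - \frac{(x-a)\cdot u}{u\cdot u}\,u \in \Sigma$, and observe that $y_* - x$ is a nonnegative multiple of $u$, hence $x \in \overline{V_-(y_*)}$; this handles both cases ($\leq 0$) uniformly and stays entirely within Definition~\ref{df:Posteriority}, with no detour through Lemma~\ref{R>S} or causal complements. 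What the paper's version buys is that it reuses machinery already set up for later arguments; what yours buys is a self-contained, constructive proof with one fewer dependency. Both are sound.
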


\begin{proof}
Firstly, note that the condition $x \later \Sigma$ means, by
definition, that $\forall y \in \Sigma$, 
$x - y \notin \overline{V_-}$.
Moreover, $\forall y \in \Sigma$ there holds $u \cdot (a - y) = 0$,
and consequently
$$
u \cdot (x - y) = u \cdot (x - a).
$$
Now suppose $u \cdot (x-a) > 0$, and let $y \in \Sigma$, then
$u \cdot (x - y) \equiv u \cdot (x - a) > 0$, which implies 
$x - y \notin \overline{V_-}$. This shows that $x \later \Sigma$. We
prove inverse direction contrapositively. Suppose, \emph{ad absurdum},
that $u \cdot (x - a)\leq 0$. If $u \cdot (x - a) = 0$, then
$x \in \Sigma$ and thus $\neg(x \later \Sigma)$. On the other hand,
$u \cdot (x - a) < 0$ implies as above that $x \earlier \Sigma$. But
then $x$ cannot be later than $\Sigma$ by Lemma~\ref{R>S}, since the
causal complement of $\Sigma$ is empty. In both cases, we have
$\neg(x \later \Sigma)$. Summarizing, we have shown that
$x \later \Sigma$ is equivalent to $(x - a) \cdot u > 0$.
\end{proof}

The (motivating) characterization of the relation $x \later y$, namely
the condition that $x^0 > y^0$ in some reference frame, can now be
written as the condition that there exists a spacelike hyperplane
$\Sigma$ that is separating in the sense that
$x \later \Sigma \later y$. As we have seen, this is equivalent to our
Definition~\ref{df:Posteriority}. 
The same holds for finite string segments.

However, for infinitely extended strings, the existence of a spacelike
separating hyperplane is a sufficient but not necessary condition for
posteriority as defined in Def.~\ref{df:Posteriority}. (A
sufficient and necessary condition would be the existence of a
spacelike \emph{or lightlike} separating hyperplane. But we don't need
this statement and therefore refrain from proving it.)

\begin{lema} 
\label{SSHyperplane}
Let $R_1 \subseteq S_1$, $R_2 \subseteq S_2$ be two connected subsets
of the strings $S_1$ and $S_2$ (which comprises points, finite string
segments and the entire string, exhausting all possibilities). If
there is a spacelike hyperplane $\Sigma$ such that
$R_1 \later \Sigma \later R_2$, then $R_1 \later R_2$.
\end{lema}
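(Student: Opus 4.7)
The plan is to reduce everything to the pointwise characterization provided by Lemma~\ref{Hyperplane}, and then combine the two sides by a direct inner-product computation with the future-pointing timelike normal $u$ of $\Sigma$. Neither the connectedness nor the fact that $R_1, R_2$ are subsets of strings plays any role; the statement holds for arbitrary subsets of $\RR^4$.

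First I would note that, by Lemma~\ref{Hyperplane}, the condition $R_1 \later \Sigma$ translates into the half-space inequality
\[
(x - a) \cdot u > 0 \word{for every} x \in R_1.
\]
Next, I would establish the mirror version of Lemma~\ref{Hyperplane}, namely that $\Sigma \later y$ is equivalent to $(y - a) \cdot u < 0$. The argument is contrapositive and entirely parallel to the proof given for Lemma~\ref{Hyperplane}: if $(y - a) \cdot u \geq 0$, then either $y \in \Sigma$ (in which case $z := y \in \Sigma$ satisfies $y - z = 0 \in \overline{V_+}$, so $z$ is not later than $y$), or, working in a Lorentz frame with $u$ as time axis, the point $z \in \Sigma$ obtained by projecting $y$ vertically onto $\Sigma$ satisfies $y - z \in \overline{V_+}$, again contradicting $\Sigma \later y$. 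Applied to every $y \in R_2$, this gives
\[
(y - a) \cdot u < 0 \word{for every} y \in R_2.
\]

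With these two pointwise statements in hand, the conclusion is immediate. Fix arbitrary $x \in R_1$ and $y \in R_2$. Then
\[
(x - y) \cdot u \;=\; (x - a) \cdot u - (y - a) \cdot u \;>\; 0.
\]
Since $u$ is a future-pointing timelike vector, any $v \in \overline{V_-}$ obeys $v \cdot u \leq 0$; hence $x - y \notin \overline{V_-}$, which by Definition~\ref{df:Posteriority} means $x \later y$. Because $x$ and $y$ were arbitrary, this yields $R_1 \later R_2$.

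The only step requiring even a moment's thought is the mirror version of Lemma~\ref{Hyperplane} for the condition $\Sigma \later R_2$, since the paper states Lemma~\ref{Hyperplane} only for a \emph{point} later than a hyperplane. Once that half-lemma is in place, the proof is a two-line inner-product calculation, and no use is made of the structure of strings — which is consistent with the fact that, by Eq.~\eqref{eqR>S}, $\later$ is a purely set-theoretic condition on the causal closures of the subsets involved.
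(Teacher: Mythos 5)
Your proof is correct and follows essentially the same route as the paper's: convert both hypotheses into half-space inequalities with respect to the timelike normal $u$ via Lemma~\ref{Hyperplane}, add them, and invoke self-duality of the light cone (Lemma~\ref{VSelfdual}) to conclude $z_1 - z_2 \notin \overline{V_-}$. Your only addition is to spell out the ``mirror'' half of Lemma~\ref{Hyperplane} for $\Sigma \later y$, which the paper silently subsumes under that lemma; that extra care is welcome but does not change the argument.
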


\begin{proof}
Let $\Sigma \doteq a + u^\perp$ satisfy the hypothesis, and let 
$z_1 \in R_1$ and $z_2 \in R_2$. Then by Lemma \ref{Hyperplane} there
holds $(z_1 - a) \cdot u > 0$ and $(a - z_2) \cdot u > 0$. Adding
these two inequalities yields $(z_1 - z_2) \cdot u > 0$, and since
$u \in V_+$ we must have $z_1 - z_2 \notin \overline{V_-}$ by
Lemma~\ref{VSelfdual} (with reversed signs), that is $z_1 \later z_2$.
This completes the proof.
\end{proof}

\paragraph{Comparability.}
For points, the posteriority relation is linear insofar as any
pair of distinct events $x \neq y \in \RR^4$ is comparable, i.e.,
either $x \later y$ or $y \later x$ holds. The first problem we
encounter in the definition of time-ordered products is that this is
not so for disjoint strings. It may happen that one string enters into
the past \emph{and} into the future of another one, and in this case
(only) the two strings are not comparable, by
Lemma~\ref{R>S}(ii). On the other hand, an event and a string are
always comparable whenever they are disjoint.

\begin{lema} 
\label{comparable} 
Let $S$ be a string and $x \in \RR^4 \setminus S$ an event disjoint
from~$S$. Then either $S \later \{x\}$ or $\{x\} \later S$.
\end{lema}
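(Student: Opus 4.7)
The plan is to argue by contradiction via Lemma~\ref{R>S}(ii). Parametrise the string as $S = y + \RR_0^+ e$ with $e \cdot e = -1$, and assume that $S$ and $\{x\}$ are incomparable, so that there exist nonnegative parameters $s_1, s_2$ with
\begin{equation*}
x - (y + s_1 e) \in \overline{V_+} \word{and} (y + s_2 e) - x \in \overline{V_+}.
\end{equation*}

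The key step is then to add these two causal vectors. Since $\overline{V_+}$ is a closed convex cone (a basic fact from Appendix~\ref{app:BasicGeo}), their sum $(s_2 - s_1)\,e$ also lies in $\overline{V_+}$. The spacelike character of $e$ now does all the work: no nonzero real multiple of a spacelike vector can belong to the closed future cone, so $s_2 = s_1$ is forced. With this equality the two displayed vectors become opposite elements of $\overline{V_+}$, so both vanish by the fact $\overline{V_+} \cap (-\overline{V_+}) = \{0\}$. Hence $x = y + s_1 e \in S$, contradicting the standing hypothesis $x \notin S$.

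I anticipate no serious obstacle: the argument turns on a single convex-cone observation combined with the spacelike condition on~$e$, and in particular avoids any explicit case analysis on the sign of $s_2 - s_1$. The two purely geometric ingredients used, namely closure of $\overline{V_+}$ under addition and trivial intersection with $-\overline{V_+}$, should either be invoked directly from or are easily deduced from the basic lemmas in the appendix, in particular Lemma~\ref{VSelfdual}.
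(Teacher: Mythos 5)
Your argument is correct and follows essentially the same route as the paper: both proofs argue by contradiction via Lemma~\ref{R>S}(ii) and reduce the matter to showing that $\overline{V_-(S)} \cap \overline{V_+(S)} = S$ for a spacelike string. The paper simply asserts that last identity, whereas you supply its proof (adding the two causal vectors in the convex cone $\overline{V_+}$ and using that no nonzero multiple of a spacelike vector is causal), so your write-up is, if anything, slightly more complete.
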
 

\begin{proof}
Suppose that neither $S \later x$ nor $x \later S$ holds. Then, by
Eq.~\eqref{eqR>S}, both $S \cap \overline{V_-(x)}$ and
$\{x\} \cap \overline{V_-(S)}$ are nonempty. However, since
$S \cap \overline{V_-(x)} \neq \emptyset$ if and only if
$\{x\} \cap \overline{V_+(S)} \neq \emptyset$, this would entail
$x \in \overline{V_-(S)} \cap \overline{V_+(S)} = S$. The result
follows.
\end{proof}

\paragraph{Transitivity.} 
Time-ordering of events is not transitive. But it has a similar
property, which we might call ``weak transitivity'': if
$y_1 \later y_2$ and $x \not\later y_2$, then $y_1 \later x$. This
fact is the basis for the proof that Bogoliubov's S-matrix satisfies
the functional equation~\cite{EG}, 
that in turn implies locality of the interacting fields in the
Epstein--Glaser construction~\cite{EG}.
Again, this does not hold for strings -- which is why a
string-localized interaction may lead in general to completely non-local
interacting fields. An example is illustrated in
Fig.~\ref{fg:big-three}, which shows three strings satisfying
$S_1 \later S_2$ and $S_3 \not\later S_2$, however $S_1$ is not later
than $S_3$.

On the other hand, weak transitivity does hold for two strings with
respect to an event.

\begin{lema} 
\label{WeakTrans} 
Let two strings $S_1,S_2$ and an event
$x \in \RR^4 \setminus (S_1 \cup S_2)$ be such that
$$
S_1 \later S_2  \word{and}  \{x\} \not\later S_2.  
$$
Then $S_1 \later \{x\}$. 
\end{lema}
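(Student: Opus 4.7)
The plan is to argue by contradiction, reducing the claim to the fact that the closed forward light cone $\overline{V_+}$ is a convex cone (equivalently, that the causal order on events is transitive).

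First, I would unpack the hypothesis $\{x\}\not\later S_2$ via Eq.~\eqref{eqR>S}: this means $\{x\}\cap\overline{V_-(S_2)}\neq\emptyset$, so there exists some $y\in S_2$ with $x\in\overline{V_-(y)}$, i.e., $y-x\in\overline{V_+}$. Next, I would suppose \emph{ad absurdum} that $S_1\not\later\{x\}$. Since $x\notin S_1$, Lemma~\ref{comparable} is not strictly needed, but in any case the negation yields some $z\in S_1$ with $z\in\overline{V_-(x)}$, i.e., $x-z\in\overline{V_+}$.

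The crux is then a one-line addition: since $\overline{V_+}$ is closed under addition (it is a closed convex cone; see the appendix on basic geometry, or directly: sums of future-pointing causal vectors are future-pointing causal), we get
\begin{equation*}
y-z = (y-x)+(x-z) \in \overline{V_+},
\end{equation*}
hence $z\in\overline{V_-(y)}$. But $z\in S_1$ and $y\in S_2$, so by Eq.~\eqref{eqR>S} this contradicts the hypothesis $S_1\later S_2$. Therefore $S_1\later\{x\}$, as desired.

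There is no real obstacle here; the only mild subtlety is verifying the additive closure of $\overline{V_+}$ (which is immediate from $V_+$ being defined by $u\cdot u>0$ and $u^0>0$, together with the reverse Cauchy--Schwarz inequality for timelike vectors, or simply from self-duality of $V_+$ as used in Lemma~\ref{VSelfdual}). Notably, the argument uses only one point $y$ of $S_2$ and one point $z$ of $S_1$, so the fact that $S_1$ and $S_2$ are strings plays no role beyond being subsets of $\RR^4$; the lemma would hold verbatim for arbitrary subsets. This is consistent with the remark following the statement, which emphasizes that weak transitivity fails for three strings precisely because one would then need to compare two extended sets rather than a set with a single event.
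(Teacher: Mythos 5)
Your proof is correct and is essentially the paper's own argument in contrapositive form: the paper writes $x\in\overline{V_-(S_2)}$, deduces $\overline{V_-(x)}\subset\overline{V_-(S_2)}$ (which is exactly the additive-cone property you invoke), and concludes $S_1\cap\overline{V_-(x)}=\emptyset$ directly rather than by contradiction. Your closing observation that the lemma holds for arbitrary subsets, not just strings, is also accurate and applies equally to the paper's proof.
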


\begin{proof}
By Eq.~\eqref{eqR>S}, the premise $S_1 \later S_2$ may be written as
$S_1 \cap \overline{V_-(S_2)} = \emptyset$. Also, $x \not\later S_2$
means $x \in \overline{V_-(S_2)}$ and consequently
$\overline{V_-(x)} \subset \overline{V_-(S_2)}$. Thus, we must have
$S_1 \cap \overline{V_-(x)} = \emptyset$, and again by 
Eq.~\eqref{eqR>S}, we get $S_1 \later \{x\}$.  
\end{proof}

\begin{defn} 
\label{df:latest-member}
Given $n$ subsets $R_1,\dots,R_n$ of Minkowski space, we say that
$R_1$ is a \emph{latest member} of the set $\{R_1,\dots,R_n\}$ if
$R_1 \later R_i$ for each $i = 2,\dots,n$. 
\end{defn}

Here the word ``latest'' denotes maximality rather than a superlative.
For instance, among a finite set of events located on a spacelike 
hyperplane, each of them is a latest one.
A basic fact is that $n$ distinct events in Minkowski space always
have some latest member, and the same is true for sufficiently small
neighborhoods of them. Again, this is not so for strings, see the
counterexample in Fig.~\ref{fg:big-three} once more.

Lemma~\ref{WeakTrans} implies that two comparable strings and one
event, disjoint from the strings, always have a latest member.
(Namely, if $x$ is later than both $S_1$ and $S_2$ then it is of
course $x$, and otherwise it is the later of the two strings.) For the
desired chopping of $n > 2$ strings, we need a bit more.

\begin{lema} 
\label{StringsPoints}
Let $S_1,\dots,S_r$ be strings among which there is a latest one, and
let $y_1,\dots,y_k$ be pairwise distinct events in Minkowski space
satisfying $y_i \notin S_j$ for all $i = 1,\dots,k$ and 
$j = 1,\dots,r$. Then, the set 
$\{S_1,\dots,S_r,\{y_1\},\dots,\{y_k\}\}$ also has a latest member.
\end{lema}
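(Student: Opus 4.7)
The strategy I would follow is to argue by induction on the number $k$ of events, exploiting the same inclusion-of-past-light-cones argument that proves Lemma~\ref{WeakTrans}. An inductive approach is genuinely needed because posteriority is not transitive (as already stressed in the remarks after Def.~\ref{df:Posteriority}), so one cannot hope to single out the desired latest member directly by, say, choosing the event with largest time coordinate in some frame.

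The base case $k = 0$ is immediate: the hypothesis gives a latest string. For the inductive step, I would apply the inductive hypothesis to the reduced family $\{S_1,\dots,S_r,\{y_1\},\dots,\{y_{k-1}\}\}$, obtaining an element $A$ (either some $S_l$ or some $y_i$) that is later than every other element of this smaller family. The remaining task is to compare $A$ with $y_k$ and to identify a latest member of the full family.

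I would split into two cases. If $A \later y_k$, then $A$ is still a latest member of the enlarged family, since it was already later than everything else by the inductive hypothesis. If instead $A \not\later y_k$, then Lemma~\ref{comparable} (when $A$ is a string) or the trivial comparability of two distinct events (when $A$ is an event) gives $y_k \later A$, and I would prove that $y_k$ is a latest member. For any other $Z$ in the smaller family, I would argue by contradiction: assuming $y_k \not\later Z$, Eq.~\eqref{eqR>S} gives $y_k \in \overline{V_-(Z)}$, and since past-directed vectors are closed under addition, $\overline{V_-(y_k)} \subset \overline{V_-(Z)}$; combined with $A \not\later y_k$ this furnishes a point $a \in A \cap \overline{V_-(y_k)} \subset \overline{V_-(Z)}$, forcing $A \not\later Z$, in contradiction with the inductive hypothesis.

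The step requiring a little attention is the uniform treatment of the two subcases of $A$ (string or event) when invoking comparability and when extracting the witness point $a$, but the set-theoretic reformulation of posteriority via Eq.~\eqref{eqR>S} absorbs both situations at once. The real conceptual hurdle is simply that the induction must cascade one event at a time; a global choice of ``the'' latest element cannot work, precisely because of the failure of transitivity already emphasized by the authors.
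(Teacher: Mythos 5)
Your argument is correct, but it is not the route the paper takes. The paper's proof is a direct, non-inductive case split on whether some event $y_i$ is later than \emph{all} the strings: if so, it picks a latest member of the finite set of events lying in $\overline{V_+(y_i)}$ (invoking the ``basic fact'' that finitely many distinct events always have a latest member, plus the observation that anything in the future of a point later than a string is itself later than that string); if not, it shows the latest string $S_1$ beats every event, using Lemma~\ref{comparable} when $y_i \not\later S_1$ and Lemma~\ref{WeakTrans} when $y_i \not\later S_{j}$ for some $j \neq 1$. Your induction on $k$ instead adds one event at a time and, in the nontrivial case $A \not\later y_k$, re-derives a version of weak transitivity inline, valid for arbitrary subsets rather than just two strings and a point: $y_k \not\later Z$ gives $\overline{V_-(y_k)} \subset \overline{V_-(Z)}$ via Eq.~\eqref{eqR>S}, so $A \not\later y_k$ forces $A \not\later Z$. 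This is sound (the needed comparability of $y_k$ with $A$ follows from Lemma~\ref{comparable} or from the disjointness/distinctness hypotheses), and it has the merit of being self-contained: it subsumes the ``finitely many points have a latest member'' fact instead of assuming it, and it treats strings and events uniformly. The trade-off is length and one overstated claim: an inductive argument is \emph{not} ``genuinely needed'' --- the paper's two-case analysis identifies the latest member globally in one pass --- and the failure of transitivity is precisely what both proofs work around, yours locally at each inductive step, the paper's via the dichotomy on whether any event dominates all strings.
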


\begin{proof}
We assume that $S_1$ is a latest member of the given strings.

\emph{First case}:
One of the points, say $y_1$, is later than all the strings. Let $y_l$
be a latest member of $\overline{V_+(y_1)} \cap \{y_1,\dots,y_r\}$,
the subset of $y$'s which lie to the future of~$y_1$. Then $y_l$ is a
latest member of \emph{all} the $y$'s, and also later than all the
strings,%
\footnote{Here we use the obvious fact that $\big(y_1 \later S
\wedge y_2 \in \overline{V_+(y_1)}\big) \Rightarrow y_2 \later S$.}
that is, $y_l$ is a latest member of
$\{S_1,\dots,S_r,\{y_1\},\dots,\{y_k\}\}$.

\emph{Second case}: 
No $y_i$ is later than all the strings. Then for every 
$i \in \{1,\dots,k\}$ there is a $j(i) \in \{1,\dots,r\}$ such that
$y_i \not\later S_{j(i)}$. If $j(i) = 1$, the label of the latest
string, then $y_i \not\later S_1$,
and Lemma~\ref{comparable} implies that $S_1 \later y_i$. If 
$j(i)\neq 1$, then $S_1 \later S_{j(i)}$ and Lemma~\ref{WeakTrans}
implies that $S_1 \later y_i$. Summarizing, in the second case for all
$i$ there holds $S_1 \later y_i$. Then $S_1$ is a latest member of
$\{S_1,\dots,S_r,\{y_1\},\dots,\{y_k\}\}$.
\end{proof}

\subsection{String chopping} 
\label{ssc:chop-shop}

As mentioned in the introduction, we wish to show that one can chop
$n$ strings into small enough pieces which are mutually comparable. 
We first give a constructive prove for $n = 2$, where it suffices to  
cut one of the strings once.

By \textit{cutting} of a string $S = \String{x,e}$ is meant the
selection of one point $x + se$ for some $s > 0$, whereby the string
becomes a union of the finite segment $x + [0,s]\,e$ and the residual
string $x + [s,\infty)\,e$. The two pieces do not overlap, since they
have only the cut point in common. We write this nonoverlapping union
as $S = S^1 \cup S^2$, but it suits us not to specify which piece is
the finite segment and which is the tail.

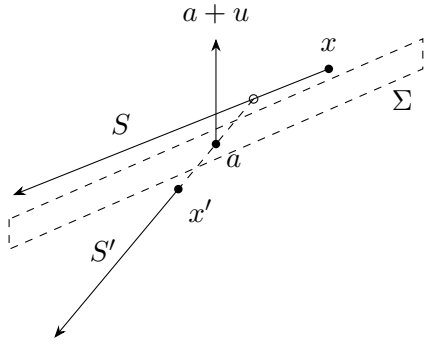
\begin{figure}[htb]
\centering
\begin{tikzpicture}[>=Stealth]
\coordinate (O) at (0,0) ; 
\coordinate (A) at (-0.5,-0.6) ;
\coordinate (U) at (-0.5,0.8) ;  
\coordinate (X) at (1,0.4) ;
\coordinate (Sa) at (-3.25,-1.6) ; \coordinate (Sb) at (-3.25,-2) ;
\coordinate (Sc) at (2.25,0.4) ; \coordinate (Sd) at (2.25,0.8) ;
\coordinate (Xa) at ($ (X)!2.5!(O) $) ;
\coordinate (Xb) at ($ (X)!4.2!(O) $) ;
\coordinate (Y) at ($ (O)!2.0!(A) $) ;  
\coordinate (Ya) at ($ (O)!4.0!(A) $) ;
\coordinate (Yb) at ($ (O)!5.3!(A) $) ;
\draw[dashed] (O) -- (Y) ;
\draw[->] (A) node[below right] {$a$} -- (U) node[above=2pt] {$a+u$} ;
\draw[->] (X) node[above=3pt] {$x$} 
   -- (Xa) node[above left] {$S$} -- (Xb) ; 
\draw[->] (Y)  node[below right] {$x'$} 
   -- (Ya) node[above=3pt] {$S'$} -- (Yb) ;
\draw[dashed] (Sa) -- (Sb) -- (Sc) node[below=12pt, left] {$\Sigma$} 
   -- (Sd) -- cycle ;
\foreach \pt in {A,X,Y} \fill (\pt) circle(1.6pt) ;
\draw (O) circle(1.6pt) ;
\end{tikzpicture} 
\caption{The strings $S$ and $S'$, with $\widetilde{S'}$ meeting $S$}
\label{fg:rear-window} 
\end{figure}

\begin{prop} 
\label{StringChopping2}
Let $S,S'$ be two disjoint strings. Then there is a cutting of $S$
into two pieces $S = S^1 \cup S^2$, such that both pairs $(S^1,S')$
and $(S^2,S')$ are comparable.
\end{prop}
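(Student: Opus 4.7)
My plan is to produce a spacelike hyperplane $\Sigma$ that \emph{contains} the entire ray $S'$ and then to cut $S$ at the unique crossing point with $\Sigma$. The payoff is that, with $S' \subset \Sigma$, any point strictly below $\Sigma$ will automatically be earlier than every point of $S'$ and any point strictly above will be later, so the two pieces of $S$ produced by such a cut will be comparable to $S'$ in opposite senses. This is essentially Lemma~\ref{SSHyperplane} adapted to the boundary situation $S' \subset \Sigma$.

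To build $\Sigma$, I use that $e'^\perp$ has signature $(1,2)$ because $e'$ is spacelike, hence contains an open cone of future-timelike vectors. Except in the degenerate case $e = \pm e'$, addressed at the end, I pick a future-timelike $u$ with $u \cdot e' = 0$ and $u \cdot e \neq 0$, and set $\Sigma \doteq x' + u^\perp$. Then $\Sigma$ is spacelike and contains $S'$, since $(x' + t e' - x') \cdot u = t(e' \cdot u) = 0$ for every $t \geq 0$. The affine function $s \mapsto (x + se - x') \cdot u$ has nonzero slope $e \cdot u$, producing a unique zero $s^* \in \RR$. If $s^* \leq 0$, the ray $S$ lies on a single side of $\Sigma$ and so is already wholly comparable to $S'$, in which case any trivial cut does the job; otherwise I set $S^1 \doteq \{x+se : 0 \leq s \leq s^*\}$ and $S^2 \doteq \{x+se : s \geq s^*\}$.

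To establish comparability, take $z = x + se \in S^1$ with $s < s^*$ and any $y \in S'$. Then $(y - z) \cdot u = -(z - x') \cdot u$ has a fixed (say positive) sign throughout the interior of $S^1$, and since $u$ is future-timelike the self-duality of $\overline{V_+}$ (Lemma~\ref{VSelfdual}) yields $y - z \notin \overline{V_-}$, i.e., $y \later z$. The boundary point $x + s^* e$ lies on $\Sigma$ but, by disjointness, not on $S'$; and two distinct points of a spacelike hyperplane are always spacelike separated, so this point is comparable to every $y \in S'$ in both directions. Assembling these observations yields $S' \later S^1$, and the symmetric argument above $\Sigma$ gives $S^2 \later S'$.

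The main obstacle is the degenerate configuration $e = \pm e'$, where no $u \perp e'$ satisfies $u \cdot e \neq 0$ and the hyperplane construction collapses. Here $S$ and $S'$ are parallel rays; writing $x - x' = \alpha e + w$ with $w \perp e$ (necessarily $w \neq 0$, else the strings would overlap) and analysing the three subcases $w^2 > 0$, $w^2 < 0$, $w^2 = 0$ via a direct computation of $(y - y')^2$ and $(y - y')^0$ for $y \in S$, $y' \in S'$, one verifies that $S$ and $S'$ are already comparable, so the trivial cut suffices.
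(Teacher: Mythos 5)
Your proof is correct, and it takes a genuinely more economical route than the paper's. The paper first disposes of several special cases (whether $S$ meets the full line $\widetilde{S'}$, whether $S$ meets the causal complement of $\widetilde{S'}$), then in the main case chooses the cut point $a\in S\cap(\widetilde{S'})^c$ \emph{first}, builds $u$ orthogonal to the spacelike $2$-plane $\linspan\{a-x',e'\}$, and finally perturbs the resulting hyperplane twice ($\Sigma\to\Sigma_\pm\to\Sigma_{1,2}$) in order to obtain \emph{strictly} separating spacelike hyperplanes to which Lemma~\ref{SSHyperplane} applies. You instead choose $u$ first (future-timelike, $u\cdot e'=0$, $u\cdot e\neq 0$, which exists whenever $e\neq\pm e'$ since the open cone of future-timelike vectors in the $(1,2)$-signature space $e'^\perp$ cannot lie in the $2$-plane $e^\perp\cap e'^\perp$), let the crossing of $S$ with $\Sigma=x'+u^\perp$ determine the cut, and absorb the boundary case by the observation that the cut point, lying on the same spacelike hyperplane as $S'\subset\Sigma$ but not on $S'$, is spacelike separated from all of $S'$ and hence comparable in both directions. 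This eliminates both the case analysis and the perturbation argument; what the paper's version buys in exchange is the slightly stronger output of genuinely separating hyperplanes between each piece and $S'$, plus an explicit identification of the configurations where no cut is needed --- neither of which is required by the statement. One small slip: in the degenerate case $e=\pm e'$ your assertion that $w\neq 0$ ``else the strings would overlap'' fails for $e=-e'$, where two disjoint, oppositely directed collinear rays have $w=0$; but then $S\cup S'$ lies on a single spacelike line, so the two strings are spacelike separated and comparable anyway, and the remaining subcases $w^2<0$ (spacelike separation), $w^2\geq0$ with $w\neq0$ (one string entirely later than the other, by the sign of $w^0$) do check out as you claim.
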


\begin{proof}
Let $S = \String{x,e}$ and $S' = \String{x',e'}$, and denote by
$\widetilde{S'} \doteq \String{x',e'} \cup \String{x',-e'}$ the full
straight line through $x'$ with direction~$e'$.

We first consider the case when $S$ meets $\widetilde{S'}$ (but is
disjoint from~$S'$). Then there are positive reals $t,t'$ such that
\begin{equation} 
\label{eqtt'} 
x + te = x' - t'e'.
\end{equation}
Suppose the linear span of $e,e'$ is spacelike. Then $S$ and $S'$ are
disjoint sets contained in the spacelike $2$-plane 
$x + \linspan\{e,e'\}$. This implies that $S$, $S'$ are spacelike
separated, and thus comparable, see Lemma~\ref{R>S}(i). (No chopping
is needed.)

Suppose now that the span of $e,e'$ is timelike or lightlike:
see Appendix~\ref{app:BasicGeo}. Then
Lemma~\ref{Spanee'} implies that one of the vectors $e \pm e'$ is
timelike or lightlike. Suppose first that $e - e'$ is a timelike or
lightlike vector, and assume that it is future oriented,
$e \in \overline{V_+(e')}$. Let $u$ be any timelike future-oriented
vector orthogonal to~$e$, and (see Fig.~\ref{fg:rear-window}) let
$$
\Sigma \doteq a + u^\perp,  \quad  a \doteq x' - \half t' e'.
$$
Now $e' \cdot u$ is strictly negative since $e \cdot u = 0$ and
$e' \in \overline{V_-(e)}$, and we therefore get  
\begin{alignat}{2}
(x + se - a) \cdot u &= - \half t' e' \cdot u  && > 0,
\notag \\
(x' + s'e' - a) \cdot u &= (s' + \half t')e' \cdot u && < 0,
\label{eq:better-or-worse} 
\end{alignat}
for all $s,s' \geq 0$.
In the first line we have used Eq.~\eqref{eqtt'}.
These two inequalities say that $S \later \Sigma$ and that 
$S' \earlier \Sigma$, respectively. By Lemma~\ref{SSHyperplane}, this
shows that $S \later S'$. If $e - e'$ is past oriented,
$e \in \overline{V_-(e')}$, then the same argument shows that
$S' \later S$.
In the case when $e + e'$ (rather than $e - e'$) 
is a timelike or lightlike vector, the same argument goes through
where now $e' \cdot u > 0$ on the right hand side 
of~\eqref{eq:better-or-worse}.
This completes the proof in the case when $S$ meets~$\widetilde{S'}$. 

For the rest of the proof we consider the case when $S$ is disjoint
from~$\widetilde{S'}$. 
First, assume that $S \cap (\widetilde{S'})^c = \emptyset$, that is,
$S$ is contained in the closure of 
$V_-(\widetilde{S'}) \cup V_+(\widetilde{S'})$. If $S$ had nontrivial
intersection with both $\overline{V_-(\widetilde{S'})}$ and
$\overline{V_+(\widetilde{S'})}$, it would have to pass through
$\widetilde{S'}$, which was excluded. Thus, $S$ is contained entirely
in the closure of either $V_-(\widetilde{S'})$ or
$V_+(\widetilde{S'})$, and in this case $S$ and $S'$ are comparable by
Lemma~\ref{R>S}(ii). No chopping is needed.

Now suppose that $S \cap (\widetilde{S'})^c \neq \emptyset$. If
$e = \pm e'$ (i.e., the strings $\widetilde S$ and $\widetilde{S'}$
are parallel), then $S$ is completely contained in the causal
complement of $\widetilde{S'}$, and thus $S$ is both later and earlier
than $S'$, and no chopping is needed.

Consider finally the case $e \neq \pm e'$. The claim is that there
exists a cutting $S = S_+ \cup S_-$, such that 
$S_+ \later S' \later S_-$.
Using Lemma~\ref{SSHyperplane}, it is sufficient to establish the
existence of two spacelike hyperplanes $\Sigma_1$, $\Sigma_2$, such
that $S_+ \later \Sigma_1 \later S'$ and 
$S' \later \Sigma_2 \later S_-$. 

\begin{figure}[htb]
\centering
\begin{tikzpicture}[>=Stealth]
\coordinate (O) at (0,0,0) ; 
\coordinate (A) at (-1.4,0.5,0) ;
\coordinate (B) at (2.6,0,2.6) ;
\coordinate (Cl) at (-1.5,2,0) ; \coordinate (Cr) at (1.5,2,0) ;
\coordinate (Clm) at (-1.5,-2,0) ; \coordinate (Crm) at (1.5,-2,0) ;
\coordinate (P) at (0,1.8,0) ; \coordinate (Pm) at (0,-1.8,0) ;
\coordinate (Sa) at (-3.2,-0.7,-3.2) ; 
\coordinate (Sb) at (-3.2,0.7,-3.2) ;
\coordinate (Sc) at (3.6,0.7,3.6) ; 
\coordinate (Sd) at (3.6,-0.7,3.6) ;
\coordinate (U) at (0,1,0) ;
\coordinate (X) at (-0.4,1,0) ; 
\coordinate (Xa) at ($ (X)!2.0!(A) $) ;
\coordinate (Z) at ($ (X)!2.8!(A) $) ;
\coordinate (Zp) at ($ (O)!1.2!(B) $) ;
\draw (Cl) -- (Crm)   (Clm) -- (Cr) ;
\draw[->] (O) -- (B) node[above=2pt] {$S'$} -- (Zp) ;
\draw[->] (O) node[left=3pt] {$x'$} -- (U) node[right=2pt] {$u+x'$} ;
\draw[->] (X) node[above right] {$x$} -- (A) node[above left] {$a$}
   -- (Xa) node[above left] {$S$} -- (Z) ; 
\draw (P) ellipse(1.3cm and 0.4cm) ;
\draw (Pm) ellipse(1.3cm and 0.4cm) ;
\draw[dashed] (Sa) -- (Sb) -- (Sc)
   -- (Sd) node[above right] {$\Sigma$} -- cycle ;
\foreach \pt in {A,O,X} \fill (\pt) circle (1.6pt) ;
\end{tikzpicture} 
\caption{Selection of a cut point on the string $S$}
\label{fg:side-window} 
\end{figure}
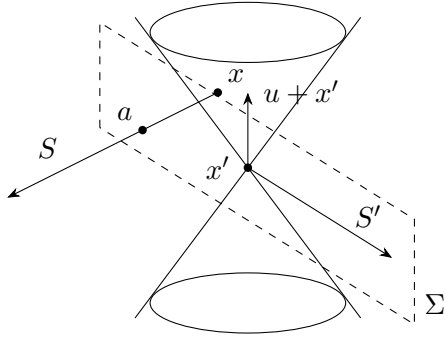

Take an event $a \in S \cap (\widetilde{S'})^c$ with $a \neq x$;
this is the place where we cut~$S$ (see Fig.~\ref{fg:side-window}). 
The vector $a - x'$ is spacelike and spacelike
separated from~$e'$, hence the $2$-plane 
$E \doteq \linspan\{a - x',e'\}$ is spacelike. Choose a timelike
future-directed vector $u$ in the orthogonal complement $E^\perp$,
which is \emph{not} orthogonal to $e$. (The possibility
$u \cdot e \neq 0$ is allowed since $e \neq \pm e'$.)
Note that $\Sigma \doteq x' + u^\perp$ contains the string $S'$ and 
meets the string $S$ at~$a$. Our hyperplanes $\Sigma_1$,
$\Sigma_2$ will be small modifications of~$\Sigma$. First, we shift
$\Sigma$ by a small amount so that it does not contain the point $a$
any more. Let $P_{e'}^\perp$ be the projector onto~$(e')^\perp$, and
let
$$
u_\pm \doteq u \pm \eps P_{e'}^\perp(a - x'),
$$
where $\eps$ is small enough that the sign of $u \cdot e$ is 
unchanged:
\begin{equation}
\sgn(u_\pm \cdot e) = \sgn(u \cdot e) \doteq \sigma.
\label{eq:good-signs} 
\end{equation}
Let now $\Sigma_\pm \doteq x' + (u_\pm)^\perp$, and define the
cutting $S = S_+ \cup S_-$, where
$$
S_\pm \doteq (a \pm \sigma\RR^+ e) \cap S.  
$$
(If $\sigma > 0$, then $S_+$ is the infinite tail of the cutting
while $S_-$ is a finite segment; if $\sigma < 0$, the roles are
interchanged.) Both $\Sigma_\pm$ still contain $S'$, and are 
moreover comparable with~$S_\pm$, as we now verify.

\begin{claim}
$S_+ \later \Sigma_-$ and $S_- \earlier \Sigma_+$\,.
\end{claim}

\begin{proof}[Proof of claim] 
Let $\xi \doteq a - x'$. By Lemma~\ref{Hyperplane}, the first relation is
equivalent to  
\begin{equation}
\label{eqS+Sigma-} 
(\xi + s\sigma e) \cdot u_- 
\equiv - \eps \xi \cdot P_{e'}^\perp(\xi) + s\,|e \cdot u_-| > 0 
\word{for all} s \geq 0.
\end{equation}
We have used $(a - x') \cdot u = 0$ and Eq~\eqref{eq:good-signs}. Note
that the projector $P_{e'}^\perp$ is given by
$$
P_{e'}^\perp(\xi) = \xi - \frac{e'\cdot\xi }{e'\cdot e'}\,e' 
= \xi + (e' \cdot \xi) e', 
$$
hence $\xi \cdot P_{e'}^\perp(\xi) = \xi \cdot \xi + (e'\cdot \xi)^2$.
Now the condition that $a \in (\widetilde{S'})^c$ means that
$\xi - te'$ is spacelike for all $t \in \RR$. Thus, the quadratic form
$- t^2 - 2t(e' \cdot \xi) + \xi \cdot \xi 
\equiv -(t + e' \cdot \xi)^2 + \xi \cdot \xi + (e' \cdot \xi )^2$ is
strictly negative, which implies that
\begin{equation}
(\xi \cdot \xi) + (e' \cdot \xi)^2 < 0;  
\word{and thus}  \xi \cdot P_{e'}^\perp(\xi ) < 0.
\label{eq:neg-term} 
\end{equation}
This proves the inequality~\eqref{eqS+Sigma-}, and thus the first
relation of the claim. The second one is shown analogously: it means
that 
\begin{equation}
\label{eqS-Sigma+} 
(\xi - s\sigma e) \cdot u_+ 
\equiv \eps \xi \cdot P_{e'}^\perp(\xi) - s\,|e \cdot u_-| < 0 
\word{for all} s \geq 0, 
\end{equation}
which holds true by Eq.~\eqref{eq:neg-term}. 
\end{proof}

We now shift the hyperplanes $\Sigma_\pm$ a little bit, so that they
still satisfy the relations of the above lemma, and are in addition
comparable with~$S'$.  
To this end, notice that the left hand side of the
inequality~\eqref{eqS+Sigma-} has the positive lower bound 
$\delta \doteq -\eps  \xi \cdot P_{e'}^\perp \xi $. Thus, we can shift 
$\Sigma_-$ away from $S'$ by using instead
$$ 
\Sigma_1 \doteq \Sigma_- + \alpha\,u_-  \word{where} 
\alpha \doteq \frac{\delta}{2u_- \cdot u_-}\, , 
$$ 
and the relation $S_+ \later \Sigma_1$ still holds. On the other hand,
for all $s > 0$ there holds
$$
(se' - \alpha u_-) \cdot u_- \equiv - \half\delta < 0, 
$$
and therefore $S' \earlier \Sigma_1$. We have now achieved 
$S_+ \later \Sigma_1 \later S'$, as required.

Similarly, one verifies that 
$\Sigma_2 \doteq \Sigma_+ - \frac{\delta}{2u_+ \cdot u_+}\,u_+$
satisfies $S_+ \earlier \Sigma_2 \earlier S'$. 
This completes the proof of Prop.~\ref{StringChopping2}.
\end{proof}

We now consider the case of $n > 2$ strings. The \emph{large string
diagonal} is defined by
\begin{equation} 
\label{eqLargeStringDiag} 
\LargeDiag \doteq \set{ (x_1,e_1,\dots,x_n,e_n)
: \String{x_i,e_i} \cap \String{x_j,e_j} \neq \emptyset 
\text{ for some } i \neq j}. 
\end{equation}
We are going to show that $n$ strings outside $\LargeDiag$ can be
chopped up into finitely many pieces which are mutually comparable
(Prop.~\ref{StringChopping}). Here we shall need to cut the strings
into more than two pieces.
By a \emph{chopping} of a string $S \doteq x + \RR_0^+ e$ we mean a
decomposition
\begin{equation} 
\label{eqChop} 
S = \Sfin \cup \Sinfty 
= S^1 \cup S^2 \cup\cdots\cup S^N \cup S^\infty,
\end{equation}
determined by $N$ numbers $0 = s_0 < s_1 < \cdots < s_N$, 
where $S^1,\dots,S^N$ are \emph{consecutive} nonoverlapping finite
segments, 
$S^\alpha \doteq  x + [s_{\alpha-1}, s_\alpha]\,e$
and $\Sinfty \doteq x + [s_N,\infty)\,e$ is the infinite tail of the
original string. We find it convenient to write
$S^{N+1} \doteq S^\infty$, too.

Before stating and proving Prop.~\ref{StringChopping} below, we need a
few lemmas. We start by considering the infinite tails of the
strings $\String{x_i,e_i}$. If one looks at $n$~strings from
sufficiently far away, their ``heads'' $x_i$ appear quite close to the
origin (wherever the origin may be located). Hence, on cutting them
far away from their heads, their infinite tails extend almost
radially to infinity and thus correspond to points on the hyperboloid
$\Spd$ of spacelike directions. Consequently, those tails can be
linearly ordered, much like events in~$\Spd$.

We realize this idea by showing first that every string $\String{x,e}$
eventually ends up in a spacelike cone centered around the string
$\String{0,e}$ with arbitrarily small opening angle. In detail, let
$D$ be a neighborhood of $e$ in $\Spd$, and let $C_D$ be the spacelike
cone centered at the origin:
\begin{equation} 
\label{eqSpc} 
C_D \doteq \RR^+ D = \set{se' : s > 0, \ e' \in D}.
\end{equation}

\begin{lema} 
\label{InfiniteTail}
For every string $\String{x,e}$ and every neighborhood $D$ of $e$
in~$\Spd$, there is an $s > 0$ such the infinite tail
$x + [s,\infty)\,e$ is contained in the spacelike cone~$C_D$.
\end{lema}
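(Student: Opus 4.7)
The plan is to show that, after removing a long enough initial segment, the tail of the string $\String{x,e}$ consists of points whose normalized spacelike direction lies in $D$, and hence whose position lies in $C_D$. First I would compute
$$(x + te)\cdot(x + te) = -t^2 + 2t\,(x\cdot e) + x\cdot x,$$
which is strictly negative for every $t$ larger than some threshold $t_0 > 0$. Thus, past $t_0$, the vector $x + te$ is spacelike and admits the decomposition
$$x + te = \lambda(t)\,\hat e(t), \word{where} \lambda(t) \doteq \sqrt{-(x+te)\cdot(x+te)} > 0, \ \hat e(t) \doteq \frac{x+te}{\lambda(t)} \in \Spd.$$

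Next I would verify that $\hat e(t) \to e$ in $\Spd$ as $t \to \infty$. Dividing numerator and denominator by $t$ yields
$$\hat e(t) = \frac{e + x/t}{\sqrt{1 - 2(x \cdot e)/t - (x \cdot x)/t^2}},$$
which clearly converges componentwise to $e$. The subspace topology on $\Spd \subset \RR^4$ then implies that, for the given neighborhood $D$ of $e$ in $\Spd$, there exists some $t_D \geq t_0$ with $\hat e(t) \in D$ for all $t \geq t_D$.

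Setting $s \doteq t_D$, for every $t \geq s$ one has $x + te = \lambda(t)\,\hat e(t) \in \RR^+ D = C_D$, which is precisely the required inclusion, so the tail $x + [s,\infty)\,e$ sits inside $C_D$. There is no substantial obstacle here: the content of the lemma is the elementary fact that a straight line eventually enters any open cone centered on its asymptotic direction. The only point requiring any care is the existence of the spacelike normalization $\hat e(t)$, which is automatic once $t$ exceeds $t_0$, after which both the modulus $\lambda(t)$ and the direction $\hat e(t)$ depend continuously on $t$.
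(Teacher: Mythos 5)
Your proof is correct and follows essentially the same route as the paper: both normalize $x+te$ to a point $|(x+te)\cdot(x+te)|^{-1/2}(x+te)$ of $\Spd$ and observe that it converges to $e$ as $t\to\infty$, hence eventually lies in $D$. Your added remark that one must first pass the threshold $t_0$ beyond which $x+te$ is spacelike is a small but welcome bit of extra care that the paper leaves implicit.
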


\begin{proof}
Note that $y \in C_D$ if and only if 
$|y \cdot y|^{-1/2}y$ lies in $D$. Thus, a point $x + te$ on the
string is in $C_D$ if and only if the point 
$|(x + te) \cdot (x + te)|^{-1/2}(x + te)$ is in the
neighborhood~$D$. But this point can be written as
$$
\frac{t}{|x \cdot x + 2t\,x\cdot e - t^2|^{1/2}}\,
\Bigl( \frac{x}{t} + e \Bigr),  
$$
which obviously converges to $e$ as $t \to \infty$.
Thus, the curve $|(x + te)^2|^{-1/2}(x + te)$ in~$\Spd$
eventually ends up in~$D$, that is, $x + te$ eventually ends up
in~$C_D$, as claimed.
\end{proof}

The previous Lemma is relevant for time ordering due to the following
fact.

\begin{lema} 
\label{InfiniteTailSpd}
Take two strings $S_1$, $S_2$ which are contained in spacelike cones
of the form \eqref{eqSpc}, $S_i \subset C_{D_i}$, where $D_1$ and
$D_2$ are double cones in the manifold $\Spd$ of spacelike directions.
Suppose $D_1 \later D_2$, where the posteriority ordering on~$\Spd$ is
defined in the same way as that of Minkowski space, see
Eq.~\eqref{eqTOGeo}. Then $S_1 \later S_2$.
\end{lema}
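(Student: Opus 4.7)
The plan is to reduce the lemma to a pointwise statement about individual unit spacelike vectors and then analyze the Lorentzian sign of the one-parameter family $d_1 - t d_2$ for $t > 0$.

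First, since $S_i \subseteq C_{D_i} = \RR^+ D_i$, every point of $S_i$ can be written as $y_i = s_i d_i$ with $s_i > 0$ and $d_i \in D_i$. By Eq.~\eqref{eqR>S}, proving $S_1 \later S_2$ amounts to showing that $s_1 d_1 - s_2 d_2 \notin \overline{V_-}$ for every such pair. The hypothesis $D_1 \later D_2$ gives this at $s_1 = s_2 = 1$, i.e., $d_1 - d_2 \notin \overline{V_-}$ for every pair in $D_1 \times D_2$. Scaling out by $s_1$ and setting $t = s_2/s_1 > 0$, the lemma therefore reduces to the pointwise claim: \emph{if $d_1, d_2 \in \Spd$ satisfy $d_1 - d_2 \notin \overline{V_-}$, then $d_1 - t d_2 \notin \overline{V_-}$ for every $t > 0$.}

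To prove the pointwise claim I compute
\begin{equation*}
(d_1 - t d_2) \cdot (d_1 - t d_2) = -\bigl(t^2 + 2 c t + 1\bigr),
\quad c \doteq d_1 \cdot d_2,
\end{equation*}
whose roots are $t_\pm = -c \pm \sqrt{c^2 - 1}$ with $t_+ t_- = 1$ by Vieta. Three cases follow. If $|c| < 1$ there are no real roots and $d_1 - t d_2$ is spacelike for all $t$, hence not in $\overline{V_-}$. If $c \geq 1$ both roots are negative (same sign since $t_+ t_- = 1 > 0$, negative since $t_+ + t_- = -2c < 0$), so the spacelike conclusion again holds for all $t > 0$. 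The genuine case is $c \leq -1$, where both roots lie in $(0, \infty)$ and $d_1 - t d_2$ is timelike or null on $[t_-, t_+]$ and spacelike outside.

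In that remaining case I must identify the side of the light cone. Plugging $t = 1$ into the quadratic yields $2(1 + c) \leq 0$, so $1 \in [t_-, t_+]$ and $d_1 - d_2$ is timelike or null; combined with $d_1 - d_2 \notin \overline{V_-}$, this forces $d_1 - d_2 \in \overline{V_+}$. Excluding the trivial linearly dependent cases ($d_1 = d_2$ contradicts the hypothesis; $d_1 = -d_2$ gives $d_1 - t d_2 = -(1 + t) d_2$, spacelike for every $t > 0$), the affine line $t \mapsto d_1 - t d_2$ does not pass through the origin. Continuity and connectedness then confine $d_1 - t d_2$ to a single component of $V_+ \cup V_-$ on the open interval $(t_-, t_+)$, and the value at $t = 1$ fixes this component as $V_+$; hence $d_1 - t d_2 \in \overline{V_+}$ on $[t_-, t_+]$ and is spacelike outside. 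I expect this connectedness step to be the main technical point requiring care; the key observation enabling it is that $d_1, d_2 \in \Spd$ both have squared norm $-1$, which prevents the affine line from meeting the origin once $d_1$ and $d_2$ are linearly independent. The double-cone structure of $D_1, D_2$ enters only through the set-theoretic relation $D_1 \later D_2$; no finer geometric property of double cones is used.
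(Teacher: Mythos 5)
Your proof is correct, but it takes a genuinely different route from the paper's. The paper reduces the hypothesis $D_1 \later D_2$ to the single relation $e_1^- \later e_2^+$ between the past tip of $D_1$ and the future tip of $D_2$, then constructs -- through a four-fold case analysis on the causal character of $\linspan\{e_1^-,e_2^+\}$ -- a future-pointing timelike $u$ with $u\cdot e_1^- > 0 > u\cdot e_2^+$; the double-cone structure propagates these inequalities to all of $D_1$ and $D_2$, so that the hyperplane $u^\perp$ through the origin separates $C_{D_1}$ from $C_{D_2}$, and Lemma~\ref{SSHyperplane} finishes the argument. You instead work pointwise on each pair of directions $(d_1,d_2)$ and track the causal character of $d_1-td_2$ via the sign of the quadratic $t^2+2(d_1\cdot d_2)t+1$, using connectedness to pin down which component of the light cone the timelike segment lies in. Your reduction to the pointwise claim is valid because $\overline{V_-}$ is a cone, and your case analysis on $c=d_1\cdot d_2$ is complete; the one spot needing care, the degenerate case $c=-1$ where $t_-=t_+=1$ and the vector at $t=1$ is null rather than timelike (so the connectedness step is vacuous), is still covered since $d_1-td_2$ is spacelike for $t\neq 1$ and the hypothesis disposes of $t=1$ directly. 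Your approach buys a slightly stronger statement -- it never uses that $D_1$, $D_2$ are double cones, only that every direction in $D_1$ is later than every direction in $D_2$ -- whereas the paper's approach yields a single separating spacelike hyperplane through the origin for the entire cones, a more rigid geometric output, at the cost of the longer case analysis.
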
 

\begin{proof}
Just as in Minkowski space (see Def.~\ref{DoCo}), each double cone
$D_i$ is characterized by its past and future tips, 
$e_i^+ \in V_+(e_i^-)$:  
$$
D_i = D_i(e_i^-, e_i^+) \cap \Spd
\doteq  V_+(e_i^-) \cap V_-(e_i^+) \cap \Spd. 
$$
The hypothesis $D_1 \later D_2$ obviously implies (in fact, is
equivalent to) $e_1^- \later e_2^+$. To proceed, we first need an
intermediate result: not only that there exists a spacelike hyperplane
$\Sigma$ such that $e_1^- \later \Sigma \later e_2^+$, but that there
is even one passing through the origin that does so.

\begin{claim}
Let $e,e' \in \Spd$ with $e \later e'$. Then there exists $u \in V_+$
such that $u \cdot e > 0$ and $u \cdot e' < 0$.
\end{claim}

\begin{proof}[Proof of claim] 
The following four cases can occur:  
\begin{enumerate}
\item 
The span of $e,e'$ is spacelike. Then, by Lemma~\ref{Spanee'},
$e \cdot e' \in (-1,1)$. Let $u \in $ span$\{e,e'\}^\perp$ be a
future-pointing timelike vector and define 
$u_\eps \doteq u - \eps(e - e')$ with $\eps > 0$ small enough so that
$u_\eps$ is still in~$V_+$. Then
$$
u_\eps \cdot e = \eps(1 + e \cdot e') > 0  \word{and} 
u_\eps \cdot e' = - \eps(1 + e \cdot e') < 0.
$$

\item 
The span of $e,e'$ is timelike. According to Lemma~\ref{Spanee'}, the
following two subcases can occur:
\begin{enumerate}
\item 
The vector $e - e'$ is timelike. 
Then, since $e \later e'$ is assumed, it is future-pointing. Moreover,
we must have $e \cdot e' < -1$. Thus, the vector $u \doteq e - e'$
does the job: $u \cdot e = -1 - e \cdot e' > 0$ and 
$u \cdot e'= e \cdot e' + 1 < 0$.

\item 
The vector $e - e'$ is spacelike and $e + e'$ is timelike. Then we
must have $e \cdot e' > 1$. If $e + e'$ is future pointing, then
$$
u \doteq P_{e'}^\perp(e) = e + (e' \cdot e)e' 
$$
is timelike, since $u \cdot u = -1 + (e \cdot e')^2 > 0$. It is also
future-pointing. (That can be seen as follows: choose $v \in V_+$ with
$v \cdot e = 0$; then $v \cdot e' \equiv v \cdot (e + e') > 0$ since
$e + e'$ is future-pointing, and so also $v \cdot u > 0$.)
Now put $u_\eps \doteq u + \eps e'$ for sufficiently small $\eps > 0$.
Then we have
$$
u_\eps \cdot e = -1 + (e' \cdot e)^2 + \eps e \cdot e' > 0
\word{and}  u_\eps \cdot e' = -\eps < 0.  
$$
If $e + e'$ is past-pointing, then 
$$
u_\eps \doteq - P_{e}^\perp(e') - \eps e 
= -\bigl( e' + (e \cdot e')e \bigr) - \eps e 
$$
has the following properties, as the reader may readily verify: it is
timelike and future-pointing, and satisfies
$$
u_\eps \cdot e = \eps > 0  \word{and} 
u_\eps \cdot e' = 1 - (e' \cdot e)^2 - \eps e \cdot e' < 0.
$$
\end{enumerate} 

\item 
The span of $e,e'$ is lightlike. From Lemma~\ref{Spanee'}, there are
two possibilities:
\begin{enumerate}
\item 
The vector $e - e'$ is lightlike. Then it is future pointing by
hypothesis, and moreover we must have $e \cdot e' = -1$. Choose
$u \in (e')^\perp \cap V_+$ and let $u_\eps \doteq u + \eps e$ with
sufficiently small~$\eps$. Then $u_\eps$ is a future-pointing timelike
vector satisfying
$$
u_\eps \cdot e = u \cdot e - \eps > 0  \word{and} 
u_\eps \cdot e' = - \eps < 0.
$$
(We used that $u \cdot e \equiv u \cdot (e - e')$ is positive since 
$e - e'$ is assumed future-pointing; and we chose $\eps$ small
enough.)

\item 
The vector $e + e'$ is lightlike. Then we must have $e \cdot e' = 1$.
If the vector $e + e'$ future-pointing, the same $u_\eps$ as in the
previous item does the job. If it is past pointing, pick 
$u \in e^\perp \cap V_+$ and let $u_\eps \doteq u-\eps e$ with
sufficiently small $\eps$. Then $u_\eps$ is a future-pointing timelike
vector satisfying
$$
u_\eps \cdot e = \eps > 0  \word{and} 
u_\eps \cdot e' = u \cdot e' + \eps < 0.
$$
(We used that $u \cdot e' \equiv u \cdot (e + e')$ is negative since
$e + e'$ is assumed past-pointing; and we chose $\eps$ small enough.)
\end{enumerate} 

\item 
In the last possible case, $e = -e'$, let $u \in e^\perp \cap V_+$ and
define $u_\eps \doteq u - \eps e$. Then
$$
u_\eps \cdot e = \eps > 0  \word{and}  u_\eps \cdot e' = - \eps < 0.
$$
\end{enumerate}
This proves the claim. 
\end{proof}

We have thus shown that there exists a future-pointing timelike vector
$u$ that satisfies $u \cdot e_1^- > 0 > u \cdot e_2^+$. It follows
that for all $e_1 \in D_1$ and $e_2 \in D_2$, we have
$u \cdot e_1 > 0 > u \cdot e_2$. This implies of course that 
$u \cdot re_1 > 0 > u \cdot se_2$ for $r,s \in \RR^+$, and since all
$z_i \in C_{D_i}$ are of the form $z_i = re_i$ with $r \in \RR^+$ and
$e_i \in D_i$, we have $C_{D_1} \later u^\perp \later C_{D_2}$ and
consequently, by Lemma~\ref{SSHyperplane}, $S_1 \later S_2$.
\end{proof}

We are now prepared for our main geometrical result.

\begin{prop} 
\label{StringChopping}
Let $(\underline{x},\underline{e})$ be outside the large string
diagonal $\LargeDiag$. Then there exists a chopping
$$     
\String{x_i,e_i} = S_i^1 \cup\cdots\cup S_i^{N_i} \cup S_i^{N_i+1}
$$
such that every selection $\{S_1^{\alpha_1},\dots,S_n^{\alpha_n}\}$
has a latest member, that is, for every \mbox{$n$-tuple}
$(\alpha_1,\dots,\alpha_n)$ there exists
$i \in \{1,\dots,n\}$ 
such that for every $j \in \{1,\dots,n\} \setminus \{i\}$ the
relation $S_i^{\alpha_i} \later S_j^{\alpha_j}$ holds. 
\end{prop}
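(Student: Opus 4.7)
The plan is a compactness argument on the product $S_1 \times \cdots \times S_n$, preceded by a tail-cutting step that reduces the problem to a compact setting. Two observations drive it. First, since $(\underline{x},\underline{e}) \notin \LargeDiag$, any $n$-tuple $(z_1,\dots,z_n) \in S_1 \times \cdots \times S_n$ consists of distinct events, and any finite set of distinct events in Minkowski space has a latest member: strict posteriority (meaning $x \later y$ but $y \not\later x$) is transitive and hence acyclic, so iteratively replacing a candidate by a strictly later one must terminate. Second, strict posteriority $z \later w$ is an open condition on $(z,w) \in \RR^4 \times \RR^4$, so ``being a latest'' is preserved under small perturbations of the $z_i$.

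First I would cut off the tails. For each $i$, set $S_i^{\mathrm{bd}} \doteq x_i + [0,s_i]\,e_i$ and $S_i^\infty \doteq x_i + [s_i,\infty)\,e_i$, and choose $s_i$ large enough that \textbf{(a)} by Lemma~\ref{InfiniteTail}, $S_i^\infty$ lies in a spacelike cone $C_{D_i}$ for an arbitrarily small double-cone neighborhood $D_i$ of $e_i$ in $\Spd$; \textbf{(b)} each tail $S_i^\infty$ is later than every point of every bounded part $S_j^{\mathrm{bd}}$, which is possible because a spacelike half-line eventually escapes the past of any bounded set; and \textbf{(c)} for every nonempty $J \subset \{1,\dots,n\}$, the tails $\{S_j^\infty\}_{j \in J}$ have a latest member. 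Property~(c) follows, in the generic case of pairwise distinct directions $e_i$, by applying the ``latest exists'' observation to the points $\{e_j\}_{j \in J} \subset \Spd$, noting that the finitely many comparabilities $e_{j^\ast(J)} \later e_j$ required are open conditions on $\Spd$, and shrinking the $D_i$ enough that these persist for the double cones themselves; Lemma~\ref{InfiniteTailSpd} then yields a latest tail $S_{j^\ast(J)}^\infty$.

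After Step~1 the product $\prod_i S_i^{\mathrm{bd}}$ is compact. Around each $(z_1,\dots,z_n) \in \prod_i S_i^{\mathrm{bd}}$, openness provides an open product of arcs $U_1(z) \times \cdots \times U_n(z)$ on which a fixed index $i^\ast(z)$ remains a latest; extracting a finite subcover yields finitely many cut points in each $S_i^{\mathrm{bd}}$, producing a chopping $S_i^{\mathrm{bd}} = S_i^1 \cup \cdots \cup S_i^{N_i}$, and setting $S_i^{N_i+1} \doteq S_i^\infty$ completes the decomposition. For a selection $(\alpha_1,\dots,\alpha_n)$ the verification splits into two cases: either every $\alpha_i \le N_i$, and an interior representative of $\prod_i S_i^{\alpha_i}$ sits in some covering product, whose $i^\ast$ identifies a latest piece in the selection; or the index set $J = \{i : \alpha_i = N_i+1\}$ is nonempty, and then (c) supplies a latest tail $S_{j^\ast(J)}^\infty$ while (b) makes it dominate every bounded piece in the selection. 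The main obstacle is securing property~(c) when some directions $e_i$ coincide, since Lemma~\ref{InfiniteTailSpd} does not separate parallel tails; in that degenerate case one first invokes Prop.~\ref{StringChopping2} pairwise on parallel non-intersecting strings --- which can always be made comparable by a single cut --- before performing the tail-cutting above.
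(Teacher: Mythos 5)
Your overall architecture --- separate the tails from the compact parts, order the tails through their asymptotic directions on $\Spd$ via Lemma~\ref{InfiniteTailSpd}, and handle the compact parts by openness of the latest-member property plus compactness --- matches the paper's. The genuine gap is in your property~(b). You want cut parameters $s_1,\dots,s_n$ such that \emph{every} tail $S_i^\infty = x_i+[s_i,\infty)\,e_i$ is later than \emph{every} bounded part $S_j^{\mathrm{bd}} = x_j+[0,s_j]\,e_j$, and you justify this by noting that a spacelike half-line eventually leaves the causal past of any bounded set. That argument treats the sets $S_j^{\mathrm{bd}}$ as fixed, but they are not: they grow precisely when the $s_j$ are enlarged to meet (a)--(c), including (b) itself for other pairs. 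Concretely, if $e_j - e_i$ is future timelike, then $x_i + te_i \in V_-(x_j + re_j)$ for all large $t,r$ with $r/t$ in a fixed open interval around $1$; hence $S_i^\infty \later S_j^{\mathrm{bd}}$ forces $s_i \geq c\,s_j$ with a constant $c>1$, so the admissible $s_i$ depends on the already-chosen $s_j$. Your one-line justification does not exclude a cycle of such dependencies. Property~(b) is in fact true, but establishing it requires showing that the dependency relation is acyclic --- which rests on the acyclicity of \emph{strict} posteriority among the directions $e_i$ (a sum of nonzero future-causal vectors cannot vanish) together with the fact that all tail-versus-bounded-part interactions at non-proportional parameter values are confined to compact regions --- and then fixing the $s_i$ along a topological order. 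This missing argument is the real content of the step. The paper avoids needing (b) altogether: its Lemma~\ref{StringsPoints} lets a mixed selection of tails and small segments have a \emph{segment} as its latest member (namely when some point is later than all the strings), so no global domination of bounded parts by tails is ever asserted.

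A second, smaller gap is your treatment of coinciding directions in (c). Invoking Prop.~\ref{StringChopping2} pairwise yields only pairwise comparability of the parallel tails, and pairwise comparability does not produce a latest member of a family --- Fig.~\ref{fg:big-three} is exactly a pairwise-comparable triple without one. For disjoint parallel rays one must argue separately that the strict ``who is later'' relation is acyclic, e.g.\ by reducing to distinct points in $e^\perp$ as the paper indicates; moreover Prop.~\ref{StringChopping2} would cut a string into two pieces, whereas in (c) you need the uncut tails themselves to have a latest member, so it is not the right tool there.
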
 

\begin{proof}
We first consider the infinite tails. Note that some of the~$e_i$ may
coincide, so the \emph{set} of $e$'s may contain fewer than $n$
distinct points. These have a latest member, and the same holds for
sufficiently small double cones $D_i \ni e_i$ (understanding that
$D_i = D_j$ if $e_i = e_j$). Let us denote the index of the latest
member of $\{D_1,D_2,\dots\}$ by $i_0$. Let further $s_i \in \RR_0^+$
be such that the infinite tail
$S_i^\infty \doteq x_i + [s_i,\infty)\,e_i$ of $S_i$ is contained in
$C_{D_i}$, see Lemma~\ref{InfiniteTail}. By
Lemma~\ref{InfiniteTailSpd}, the infinite tail with number $i_0$ is
later than all the others. If $e_{i_0}$ coincides with
$\{e_{i_1},\dots,e_{i_k}\}$, then the corresponding infinite tails are
parallel and disjoint, and therefore have a latest member. 
(Since the problem can be reduced to distinct points in
$e_{i_0}^\perp$.) This is a latest member among all infinite tails.

We now construct a chopping of the compact segments 
$\Sfin_i \doteq x_i + [0,s_i]\,e_i$. Consider an arbitrary point
$(y_1,\dots,y_n)$ on $\Sfin_1 \xyx \Sfin_n$.
The events $y_i$ are all distinct and hence the set
$\{y_1,\dots,y_n\}$ has a latest member, and the same holds for
sufficiently small neighborhoods $U_0(y_i)$ of the $y_i$. Similarly,
for any subset $I \subset \{1,\dots,n\}$ with complement
$J \doteq \{1,\dots,n\} \setminus I$, the infinite tails $S_j^\infty$
and the events~$y_i$, $(i,j) \in I \x J$,
fulfil the hypotheses of Lemma~\ref{StringsPoints}, which state that
the $n$~sets $S_j^\infty$ and $\{y_i\}$ together have a latest member.
The same
holds for sufficiently small neighborhoods $U_I(y_i)$ of the points
$y_i$, $i \in I$. Let now $U(y_i)$ be the intersection of $U_0(y_i)$
and of all $U_I(y_i)$, where $I$ runs through the subsets of
$\{1,\dots,n\} \setminus \{i\}$. Then for any 
$I \subset \{1,\dots,n\}$, the $n$~sets $S_j^\infty$ and $U(y_i)$
together, for $(i,j) \in I \x J$, have a latest member. Of course the
same holds for the intersections of these neighborhoods with the
corresponding strings,
\begin{equation} 
\label{eqSegment}
V(y_i) \doteq U(y_i) \cap \Sfin_i .
\end{equation}
Summarizing, for each $(y_1,\dots,y_n) \in \Sfin_1 \xyx \Sfin_n$ there
is a neighborhood on the strings
$V(y_1) \xyx V(y_n) \subset \Sfin_1 \xyx \Sfin_1$
wherein for any $I \subset \{1,\dots,n\}$ the $n$~sets $S_i^\infty$
and $V(y_j)$, with $(i,j) \in I \x J$, have a latest member. Now for
each~$i$ the union
$$
\bigcup \set{V(y_i) : y_i \in \Sfin_i}
$$ 
is an open covering of the set $\Sfin_i$. 
By compactness, it has a finite subcovering. That is to say, on
the string segment $\Sfin_i$ there are finitely many events
$y_i^1,\dots,y_i^{N_i}$ such that the finite union
$$ 
V(y_i^1) \cup\cdots\cup V(y_i^{N_i})
$$ 
still covers $\Sfin_i$. Here we may assume that 
$y_i^1,\dots,y_i^{N_i}$ are consecutive events along the segment
$\Sfin_i$; and that $V(y_i^\alpha)$ and $V(y_i^\beta)$ overlap if and 
only if $\beta = \alpha \pm 1$. All
these neighborhoods still have a
latest member in the sense mentioned after Eq.~\eqref{eqSegment}.
Now choose, for each $\alpha \in \{1,\dots, N_i-1\}$, a number
$s_i^\alpha$ such that
$x_i + s_i^\alpha e_i \in V(y_i^\alpha) \cap V(y_i^{\alpha+1})$, 
and define
$$
S_i^\alpha \doteq x_i + [s_i^\alpha, s_i^{\alpha+1}]\, e_i. 
$$
Then $S_i^\alpha$ is included in $V(y_i^\alpha)$, and hence each
$n$-tuple of string segments or infinite tails 
$S_1^{\alpha_1}, \dots, S_n^{\alpha_n}$ has a latest member, as
claimed. This concludes the proof of Prop.~\ref{StringChopping}.
\end{proof}

\section{Time-ordered products of linear fields} 
\label{sec:TimeOrderedProd}

We consider quantum fields of the form~\eqref{eqFieldSt}, not fixing
the ``weight function'' $u(s)$ and admitting the case when the weight
function has support in a proper interval $I \subset \RR_0^+$. In this
case the field is localized on the (possibly finite) string segment
$x + Ie$. The same holds after multiplication of the field with a
$C^\infty$ function $f(x,e)$. The fields of this form, with $u$
varying and multiplied by $C^\infty$ functions, generate a linear
space of operator valued distributions, which we denote by $\calL$ and
call the space of ``linear fields''.%
\footnote{For example, if we admit among the point-localized fields
also vector fields $A_\mu(x)$, then the space $\calL$ contains fields
of the form $\int_{s_1}^{s_2} ds\, A_\mu(x + se)\, e^\mu$.}

We now set out to define the time-ordered products
\begin{equation} 
\label{eqTn} 
\T_n \fieldSt_1(x_1,e_1) \cdots \fieldSt_n(x_n,e_n)
\end{equation}
of linear fields $\fieldSt_i \in\calL$.   
These are operator-valued distributions on $(\RR^4 \x \Spd)^{\x n}$
acting on the domain $\calD$ of vectors with finite particle number
and smooth momentum-space wave functions, which are required to share
the following properties.
\begin{itemize} 
\refstepcounter{P} 
\item[\theP] 
\label{T1} \refstepcounter{P} 
$\T_1$ is given by $\T_1\fieldSt(x,e) \doteq \fieldSt(x,e)$. 
\item[\theP] 
\label{TLinearity} \refstepcounter{P} \emph{Linearity}. 
The time-ordered product $\T_n$ is an $n$-linear mapping from the
space $\calL$ of linear fields into operator-valued distributions
acting on~$\calD$. 
\item[\theP] 
\label{TSymmetry} \refstepcounter{P} \emph{Symmetry}. 
$\T_n$ is totally symmetric in its $n$ arguments.
\item[\theP] 
\label{TCausality} \refstepcounter{P} \emph{Causality}.
Let $\fieldSt_i$ be localized on the string (or string segment) $S_i$,
$i = 1,\dots,n$. If $S_i \later S_j$ for all $i \in \{1,\dots,k\}$ and
$j \in \{k+1,\dots,n\}$, then the following factorization holds:
\begin{align*}
& \T_n \fieldSt_1(x_1,e_1) \cdots \fieldSt_n(x_n,e_n)
\\
&\qquad = \T_k \fieldSt_1 (x_1,e_1) \cdots \fieldSt_k(x_k,e_k)\,
\T_{n-k} \fieldSt_{k+1}(x_{k+1},e_{k+1}) \cdots \fieldSt_n(x_n,e_n).
\end{align*}
\end{itemize}
In the sequel, we shall generically denote fields in $\calL$ by
$\fieldSt(x,e)$, without further specifications. 

Before we turn to the construction of the $T$-products, we recall
Wick's theorem for linear fields, which is also valid in the
string-localized case [here $\field(i)$ denotes $\field(x_i)$ or
$\field(x_i,e_i)$ in the string-localized case]:
\begin{equation}
\label{eqWickLinear} 
\field(1) \cdots \field(n) 
= \sum_G \prod_{l\in\Eint} 
\Scalprod{\Omega}{\field(s(l)) \field(r(l))\,\Omega}\,
\Wickli \prod_{l\in\Eext} \field(s(l)) \Wickre.  
\end{equation}
Here, $\Omega$ denotes the vacuum vector and $\scalprod{\cdot}{\cdot}$
denotes the scalar product. $G$ runs through the set of all 
graphs with vertices $\{1,\dots,n\}$ and oriented lines, such that
from every vertex there emanates 
one line. The lines either connect two
vertices (internal lines, $l \in \Eint$) or go from a vertex to the
exterior (external lines, $l \in \Eext$). The initial vertex of an
internal line $l$ (source $s(l)$) has a smaller index than its final
vertex (target $r(l)$). The external lines only have sources.

Let us recall how the time-ordered products are constructed in the
point-local case. In a first step, one shows that Wick's
expansion~\eqref{eqWickLinear} also holds for the time-ordered
products outside the large diagonal $\{x_i \neq x_j\}$, namely:
\begin{equation}
\label{eqTWickLinearPt} 
\T \fieldSt(1) \cdots \fieldSt(n)
= \sum_G \prod_{l\in\Eint} 
\Scalprod{\Omega}{\T \field(s(l)) \field(r(l))\,\Omega}\, 
\Wickli \prod_{l\in\Eext} \field(s(l)) \Wickre. 
\end{equation}
(The vacuum expectation value 
$\Scalprod{\Omega}{\T \field(x) \field(y)\,\Omega}$ is called the
\emph{Feynman propagator}.) This is shown by induction, using 
that $n$ distinct points always have a latest member in the 
posteriority sense.
In a second step, one constructs the extension across the large
diagonal (requiring certain (re-)normalization conditions). If the
scaling degree of the Feynman propagator is less than~$4$, then the
$T$-products are fixed (on all $\RR^{4n}$), namely, they are given by
Eq.~\eqref{eqTWickLinearPt}.
On the other hand, if the scaling degree of the Feynman propagator is
$\geq 4$ one may add, depending on the scaling degree and the number
of internal lines of the graph in~\eqref{eqTWickLinearPt},
renormalization terms in the form of delta distributions (and their
derivatives) in the difference variables with ``internal'' indices.
This is the case for fields with spin $\geq 1$ acting on a Hilbert
space.

We show here that for string-localized fields $\field(x,e)$ the $T_n$
are fixed outside the large string diagonal $\LargeDiag$ just by the
geometrical time-ordering prescription, namely they are given by the
same expression~\eqref{eqTWickLinearPt} as in the point-like case. As
mentioned in the introduction, the problem we have to overcome is the
fact that the set of points in $(\RR^4 \x \Spd)^{\x n}$ which
correspond to strings that are \emph{not} comparable in the sense
of~$\later$ is much larger than $\LargeDiag$, in fact it contains an
open set. We use our results on string chopping from the last section
to show that they are nevertheless fixed outside~$\LargeDiag$.
Recall that we are dealing with string-localized fields that can be
written as line integrals over point-localized fields as in
Eq.~\eqref{eqFieldSt}. Thus, for any chopping of the string
$\String{x,e} = \bigcup_\alpha S^\alpha$ as in Eq.~\eqref{eqChop}, the
field $\fieldSt(x,e)$ can be written as a sum
\begin{equation} 
\label{eqFieldSum} 
\fieldSt(x,e) = \sum_{\alpha=1}^{N+1} \fieldSt^{\alpha}(x,e)\,,
\word{where} 
\fieldSt^\alpha(x,e) 
\doteq \int_{s_{\alpha-1}}^{s_{\alpha}} ds\, u(s)\, \fieldPt(x + se)
\end{equation}
is localized on the segment $S^\alpha$. (We put $s_0\doteq 0$ and
$s_{N+1} \doteq \infty$.)

We start with two fields. 
If the two strings $\String{x,e} \doteq S$ and 
$\String{x',e'} \doteq S'$ are comparable, then \ref{TCausality}
implies that
\begin{equation} 
\T \fieldSt(x,e) \fieldSt(x',e') = \begin{cases}
\fieldSt(x,e)\,\fieldSt(x',e') &\text{if } S \later S', \\
\fieldSt(x',e')\,\fieldSt(x,e) &\text{if } S \earlier S'.
\end{cases}
\label{eqT2} 
\end{equation}
This is well defined, for if $S$ is both later and earlier than $S'$
then it is spacelike separated from $S'$ by Lemma~\ref{R>S} and 
the fields commute, so that both lines in \eqref{eqT2} are valid. 
The problem is that there is an open set of pairs of strings which are
not comparable, namely whenever $S$ meets both the past and the future
of~$S'$. This is solved by the procedure of string chopping, which fixes
the $T$-product outside the string diagonal.

\begin{prop} 
\label{T2Diag} 
The time ordered product $\T \field(x,e) \field(x'e')$ is uniquely
fixed outside the string diagonal $\Delta_2$ by \ref{T1} through
\ref{TCausality}. It satisfies Wick's expansion
\begin{equation} 
\label{eqTWick2} 
\T \field(x,e) \field(x',e') 
= \Wickli \field(x,e) \field(x',e') \Wickre
+ \Scalprod{\Omega}{\T \field(x,e) \field(x',e')\,\Omega}. 
\end{equation}
\end{prop}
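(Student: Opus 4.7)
The plan is to reduce the case of a generic pair of disjoint strings, which may be incomparable, to a sum of terms involving comparable pieces by invoking Prop.~\ref{StringChopping2}; axiom \ref{TCausality} can then be applied to each term, and Wick's expansion~\eqref{eqTWick2} will follow from the ordinary Wick theorem~\eqref{eqWickLinear} for the resulting free-field products.

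Concretely, given $(x,e,x',e') \notin \LargeDiagTwo$, the strings $S = \String{x,e}$ and $S' = \String{x',e'}$ are disjoint, so Prop.~\ref{StringChopping2} supplies a cut $S = S^1 \cup S^2$ with each $S^\alpha$ comparable with $S'$. Using~\eqref{eqFieldSum} I decompose $\fieldSt(x,e) = \fieldSt^1(x,e) + \fieldSt^2(x,e)$, where $\fieldSt^\alpha$ is localized on $S^\alpha$. By \ref{TLinearity},
$$
\T \fieldSt(x,e)\fieldSt(x',e') = \sum_{\alpha=1}^{2} \T \fieldSt^\alpha(x,e)\fieldSt(x',e'),
$$
and by \ref{TCausality} each summand equals either $\fieldSt^\alpha(x,e)\fieldSt(x',e')$ or $\fieldSt(x',e')\fieldSt^\alpha(x,e)$ according to whether $S^\alpha \later S'$ or $S^\alpha \earlier S'$. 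No further freedom is involved, so \ref{T1}--\ref{TCausality} fix the $T$-product off $\LargeDiagTwo$ (and the case where both relations hold simultaneously is unambiguous because the fields commute at spacelike separation).

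To obtain Wick's expansion, I apply~\eqref{eqWickLinear} with $n=2$ to each summand. Both possible operator orderings produce the same normal-ordered term $\Wickli \fieldSt^\alpha(x,e)\fieldSt(x',e')\Wickre$ plus a scalar two-point function, the particular scalar depending on which ordering occurred. Summing over $\alpha$ and using linearity of normal ordering collapses the Wick-ordered pieces into $\Wickli \fieldSt(x,e)\fieldSt(x',e')\Wickre$, while the remaining $c$-number is identified with $\Scalprod{\Omega}{\T \fieldSt(x,e)\fieldSt(x',e')\,\Omega}$ by taking the vacuum expectation value of the entire equation. This yields~\eqref{eqTWick2}; since its right-hand side makes no reference to the chopping, the construction is automatically chopping-independent, which simultaneously confirms uniqueness.

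I expect the main subtlety to lie not in any individual step but in the bookkeeping required to merge the two cases of \ref{TCausality} into the single clean formula~\eqref{eqTWick2}: the merger hinges precisely on the symmetry of $\Wickli\,\cdot\,\Wickre$ under field exchange, which reabsorbs the difference between the two operator orderings into the scalar Feynman propagator. Once this symmetry is exploited, the remainder is largely bookkeeping rather than substantive analysis.
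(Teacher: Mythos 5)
Your overall strategy coincides with the paper's: cut $S$ by Prop.~\ref{StringChopping2}, decompose $\field(x,e)=\field^1+\field^2$ as in Eq.~\eqref{eqFieldSum}, apply \ref{TLinearity} and \ref{TCausality} to each summand, and then invoke Wick's theorem for ordinary products to obtain \eqref{eqTWick2}. The gap is in the uniqueness claim. You assert that ``since its right-hand side makes no reference to the chopping, the construction is automatically chopping-independent,'' but this is circular: the right-hand side of \eqref{eqTWick2} contains $\Scalprod{\Omega}{\T\field(x,e)\field(x',e')\,\Omega}$, which at this stage is only \emph{defined} through the chosen chopping --- it is the sum over $\alpha$ of whichever ordered two-point function, $\langle\field^\alpha(x,e)\,\field(x',e')\rangle$ or $\langle\field(x',e')\,\field^\alpha(x,e)\rangle$, the posteriority relation of the piece $S^\alpha$ dictates. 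Taking the vacuum expectation value of your equation merely restates that definition; it does not show that a different admissible cut point yields the same $c$-number or the same operator. Prop.~\ref{StringChopping2} gives many admissible cuttings, and a priori each produces its own candidate for the $T$-product.

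The paper devotes the bulk of its proof to exactly this point, via a common-refinement argument: given a second cutting $S=\widetilde S^1\cup\widetilde S^2$, one of the new pieces is contained in one of the old ones, so that $S^1=\widetilde S^1\cup S^{12}$ and $\widetilde S^2=S^{12}\cup S^2$ with a ``middle piece'' $S^{12}$ that inherits comparability with $S'$ from the piece containing it; linearity of the $T$-product then shows that both cuttings give the same result. (Locality enters when a segment is both later and earlier than $S'$, so that the two orderings agree --- you invoke this only within a single chopping, not to compare two of them.) Adding this refinement step would complete your proof; as written, the central assertion ``uniquely fixed'' is not established.
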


\begin{proof}
If the two strings $\String{x,e} \doteq S$ and 
$\String{x',e'} \doteq S'$ are comparable, their $T$-product has been
defined in Eq.~\eqref{eqT2}. If the strings are not comparable, then
we cut one string, say $S$, into two pieces $S = S^1 \cup S^2$ such
that the pairs $(S^1,S')$ and $(S^2,S')$ are comparable, see
Prop.~\ref{StringChopping2}. As explained in Eq.~\eqref{eqFieldSum},
the field $\fieldSt(x,e)$ can be written as a sum 
$\fieldSt = \fieldSt^1 + \fieldSt^2$, where the field 
$\fieldSt^\alpha$ is localized on $S^\alpha$, $\alpha = 1,2$. By 
linearity \ref{TLinearity} of the $T$-product, we have
\begin{equation} 
\label{eqS1S2} 
\T \fieldSt(x,e) \fieldSt(x',e') 
= \T \fieldSt^1(x,e) \fieldSt(x',e') 
+ \T \fieldSt^2(x,e) \fieldSt(x',e'), 
\end{equation}
where both terms are fixed as in Eq.~\eqref{eqT2}. 

We need to show independence of the chosen chopping. Given a different
chopping $S = \widetilde S^1\cup\widetilde S^2$, one of the new pieces
$\tilde S^\alpha$ is contained in one of the old pieces, $S^\beta$. We
may assume that $\widetilde S^1 \subset S^1$. Then we have
\begin{equation} 
\label{eqS12} 
S^1 = \widetilde S^1 \cup S^{12}, \quad 
\widetilde S^2 = S^{12} \cup S^2, 
\end{equation}
where $S^{12} \doteq S^1 \setminus \widetilde S^1$ is the ``middle
piece'':
$$
\begin{tikzpicture}[>=Stealth]
\coordinate (O) at (0,0) ; 
\coordinate (A) at (2.2,0) ;
\coordinate (B) at (3.4,0) ;
\coordinate (Z) at (6,0) ;
\coordinate (S1) at ($ (O)!0.5!(A) $) ; 
\coordinate (S12) at ($ (A)!0.5!(B) $) ;
\coordinate (S2) at ($ (B)!0.4!(Z) $) ; 
\draw[->] (O) -- (S1) node[above=3pt] {$\widetilde{S}^1$} 
   -- (S12) node[above=3pt] {$S^{12}$} 
   -- (S2)  node[above=3pt] {$S^2$} -- (Z) ; 
\foreach \pt in {A,B,O} \fill (\pt) circle (1.6pt) ;
\end{tikzpicture} 
$$
The field decomposes as 
$\fieldSt  = \tilde{\fieldSt}^1 + \tilde{\fieldSt}^2$, where the
operator $\tilde{\fieldSt}^\alpha$ is localized on
$\widetilde S^\alpha$, $\alpha = 1,2$; and by Eq.~\eqref{eqS12} we
have
$$
\tilde{\fieldSt}^2(x,e) = \fieldSt^{12}(x,e) + \fieldSt^2(x,e)
\word{and} 
\tilde{\fieldSt}^1(x,e) + \fieldSt^{12}(x,e) = \fieldSt^1(x,e), 
$$
where $\fieldSt^{12}(x,e)$ is localized on the middle piece~$S^{12}$.
Notice that, by construction, $S^{12}$ is comparable with $S'$ since
it is contained in $S^1$ (or $\widetilde S^2$), and thus
$\T \fieldSt^{12}(x,e) \fieldSt(x',e')$ is fixed by Eq.~\eqref{eqS12}. 
With respect to the new chopping, therefore,
\begin{align*}
\T \fieldSt(x,e) \fieldSt(x',e') 
&= \T \tilde{\fieldSt}^1(x,e) \fieldSt(x',e') 
+ \T \tilde{\fieldSt}^2(x,e) \fieldSt(x',e')
\\
&= \T \tilde{\fieldSt}^1(x,e) \fieldSt(x',e')
+ \T \fieldSt^{12}(x,e) \fieldSt(x',e')
+ \T \fieldSt^2(x,e) \fieldSt(x',e')
\\
&= \T \fieldSt^1(x,e) \fieldSt(x',e')
+ \T \fieldSt^2(x,e) \fieldSt(x',e'). 
\end{align*}
This confirms independence of the chosen chopping in
Eq.~\eqref{eqS1S2}; we have shown uniqueness outside~$\LargeDiag$.
(If the roles of $S$ and~$S'$ are reversed, the
same conclusion applies; and by comparing either string cutting to a
chopping where both strings are cut, the linearity of the
$\T$-products maintains the uniqueness.)
On substituting Eq.~\eqref{eqT2} into Eq.~\eqref{eqS1S2} and applying
Wick's theorem for ordinary products, we get Wick's
expansion~\eqref{eqTWick2} for the $T$-products.
\end{proof}

We turn to the case of $n > 2$ fields, 
and show that Wick's expansion~\eqref{eqTWickLinearPt} also holds for
string-localized fields -- outside the large string diagonal.

\begin{prop} 
\label{TWickLinearDiag}
The time-ordered $n$-fold product of a string-localized free field
$\fieldSt(x_i,e_i)$ is uniquely fixed outside the large string
diagonal $\LargeDiag$, namely there holds
\begin{align} 
\label{eqTWickLinear} 
& \T \fieldSt(x_1,e_1) \cdots \fieldSt(x_n,e_n)
\nonumber \\
&\qquad = \sum_G \prod_{l\in\Eint} 
\Scalprod{\Omega}{\T \fieldSt(x_{s(l)},e_{s(l)})
\fieldSt(x_{r(l)},e_{r(l)})\,\Omega} \,
\Wickli \prod_{l\in\Eext} \fieldSt(x_{s(l)},e_{s(l)})\Wickre 
\end{align}
outside the large string diagonal. (Same notation as above.) 
\end{prop}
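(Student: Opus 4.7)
The plan is to argue by induction on $n$, with the base case $n = 2$ supplied by Proposition~\ref{T2Diag}. Fix $n > 2$ and a configuration $(\underline{x},\underline{e})$ outside $\LargeDiag$. First I would apply Proposition~\ref{StringChopping} to chop each string $\String{x_i,e_i}$ into finite pieces and a tail $S_i^{\alpha_i}$ such that every $n$-selection $\{S_1^{\alpha_1},\dots,S_n^{\alpha_n}\}$ has a latest member. By Eq.~\eqref{eqFieldSum}, each field splits as $\fieldSt(x_i,e_i) = \sum_{\alpha_i} \fieldSt_i^{\alpha_i}$ with $\fieldSt_i^{\alpha_i} \in \calL$ localized on~$S_i^{\alpha_i}$, and by~\ref{TLinearity} it is enough to compute $\T\fieldSt_1^{\alpha_1}(x_1,e_1)\cdots\fieldSt_n^{\alpha_n}(x_n,e_n)$ for each tuple $\vec\alpha$ and then sum.

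For each $\vec\alpha$, pick a latest segment $S_i^{\alpha_i}$ and apply Causality~\ref{TCausality} to factor
\begin{equation*}
\T\fieldSt_1^{\alpha_1}\cdots\fieldSt_n^{\alpha_n}
= \fieldSt_i^{\alpha_i}\cdot \T\prod_{j\ne i}\fieldSt_j^{\alpha_j}.
\end{equation*}
The remaining $(n-1)$ segments lie on pairwise disjoint parent strings (since $(\underline{x},\underline{e}) \notin \LargeDiag$), so the inductive hypothesis applied to them yields a Wick expansion of the $(n-1)$-fold $T$-product in terms of Feynman propagators $\Scalprod{\Omega}{\T\fieldSt_{s(l)}^{\alpha_{s(l)}}\fieldSt_{r(l)}^{\alpha_{r(l)}}\,\Omega}$ and Wick monomials over the vertex set $\{1,\dots,n\}\setminus\{i\}$. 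I would then move $\fieldSt_i^{\alpha_i}$ through these Wick monomials using the standard identity
\begin{equation*}
\fieldSt(y)\Wickli \fieldSt(z_1)\cdots\fieldSt(z_k)\Wickre
= \Wickli \fieldSt(y)\fieldSt(z_1)\cdots\fieldSt(z_k)\Wickre
+ \sum_{j=1}^{k}\Scalprod{\Omega}{\fieldSt(y)\fieldSt(z_j)\,\Omega}\,
\Wickli\prod_{l\ne j}\fieldSt(z_l)\Wickre,
\end{equation*}
and replace each new contraction $\Scalprod{\Omega}{\fieldSt_i^{\alpha_i}\fieldSt_j^{\alpha_j}\,\Omega}$ by the Feynman propagator $\Scalprod{\Omega}{\T\fieldSt_i^{\alpha_i}\fieldSt_j^{\alpha_j}\,\Omega}$ (they coincide since $S_i^{\alpha_i}$ is latest, and symmetry of the $T$-product handles the reordering demanded by the convention $s(l) < r(l)$ on labels). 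The two summands in the identity match, respectively, the graphs on $\{1,\dots,n\}$ in which vertex~$i$ carries an external line and those in which it emits an internal line; so together they reproduce the expansion~\eqref{eqTWickLinear} for the chopped fields.

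Summing over $\vec\alpha$ and using linearity, each factor $\sum_{\alpha_i}\fieldSt_i^{\alpha_i}$ reconstitutes $\fieldSt(x_i,e_i)$, while each pair-sum $\sum_{\alpha_i,\alpha_j}\Scalprod{\Omega}{\T\fieldSt_i^{\alpha_i}\fieldSt_j^{\alpha_j}\,\Omega}$ reassembles the Feynman propagator of the full fields, well defined outside $\Delta_2$ by Proposition~\ref{T2Diag}. This yields~\eqref{eqTWickLinear}; uniqueness outside $\LargeDiag$ follows because the right-hand side is intrinsic to the full fields and does not involve the chopping.

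The hard part will be the combinatorial bookkeeping in the inductive step: one must check that the graphs on $\{1,\dots,n\}\setminus\{i\}$ furnished by the induction, enriched by the two ways in which vertex~$i$ can be incorporated, enumerate precisely the graphs on $\{1,\dots,n\}$ without duplication and with the correct source/target assignments. Two lesser checks are that the $T$-product of the finite residual segments is indeed covered by the inductive hypothesis (which is the case, since $\calL$ includes finite-segment fields and the segments are disjoint) and that the identification $\Scalprod{\Omega}{\fieldSt_i^{\alpha_i}\fieldSt_j^{\alpha_j}\,\Omega} = \Scalprod{\Omega}{\T\fieldSt_i^{\alpha_i}\fieldSt_j^{\alpha_j}\,\Omega}$ used to recognise Feynman propagators really does follow from the latest-member property guaranteed by Proposition~\ref{StringChopping}.
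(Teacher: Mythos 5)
Your proposal is correct and follows essentially the same route as the paper's own proof: chop via Proposition~\ref{StringChopping}, expand by linearity, factor out the latest segment with \ref{TSymmetry} and \ref{TCausality}, apply the induction hypothesis to the remaining segment-localized fields, absorb the latest factor using Wick's theorem for ordinary products, promote the resulting two-point functions to time-ordered ones via the latest-member property, and finally resum over the chopping indices. The points you flag as needing care (the graph bookkeeping when vertex $i$ is reattached, and the applicability of the induction hypothesis to segment-localized elements of $\calL$) are exactly the steps the paper carries out.
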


\begin{proof}
Let $(x_0,e_0,\dots,x_n,e_n)$ be outside the large string diagonal. 
That means that 
the strings $S_i \doteq \String{x_i,e_i}$ are mutually disjoint, for
$i = 0,\dots,n$. We wish to determine 
$T_{n+1} \doteq \T \fieldSt(0) \cdots \fieldSt(n)$, where we have
written $\fieldSt(i) \doteq \fieldSt(x_i,e_i)$, under the induction
hypothesis that the formula~\eqref{eqTWickLinear} is valid for
$T_n = \T \fieldSt(1) \cdots \fieldSt(n)$. Choose a chopping of the
$n + 1$ strings as in Prop.~\ref{StringChopping}, and let
$\fieldSt(i) = \sum_{\alpha=1}^{N_i+1} \fieldSt^{\alpha}(i)$ be the
corresponding decomposition as in Eq.~\eqref{eqFieldSum}. Then by
linearity~\ref{TLinearity},
$$
T_{n+1} = \sum_{\alpha_0,\dots,\alpha_n}
\T \fieldSt^{\alpha_0}(0) \cdots \fieldSt^{\alpha_n}(n) .
$$
For given $(\alpha_0,\dots,\alpha_n)$, denote by $i_0$ the index of
the latest member of the set of string segments
$\{S_0^{\alpha_0},\dots,S_n^{\alpha_n}\}$ as in
Prop.~\ref{StringChopping}. Then by \ref{TSymmetry}
and~\ref{TCausality},
$$
T_{n+1} 
= \sum_{\alpha_0,\dots,\alpha_n} \fieldSt^{\alpha_{i_0}}(i_0)\, 
\T \prod_{i\in I} \fieldSt^{\alpha_i}(i), 
$$
where we have written $I \doteq \{0,\dots,n\} \setminus \{i_0\}$. 
By the induction hypothesis, this is 
$$
\sum_{\alpha_0,\dots,\alpha_n} \fieldSt^{\alpha_{i_0}}(i_0) \, 
\sum_G \prod_{l\in\Eint} 
\bigl< \T \fieldSt^{\alpha_{s(l)}}(s(l))
\fieldSt^{\alpha_{r(l)}}({r(l)}) \bigr> \, 
\Wickli \prod_{l\in\Eext} \fieldSt^{\alpha_{s(l)}}({s(l)}) \Wickre, 
$$
where $G$ runs through all graphs $\calG(I)$ with vertices~$I$, and
$\langle\cdot\rangle$ denotes the vacuum expectation value. Using
Wick's Theorem for ordinary products, we have
\begin{multline*}
\fieldSt^{\alpha_{i_0}}(i_0) \,
\Wickli \prod_{i\in\Vext} \fieldSt^{\alpha_i}(i) \Wickre 
\\
= \Wickli \fieldSt^{\alpha_{i_0}}(i_0) 
\prod_{i\in\Vext} \fieldSt^{\alpha_i}(i) \Wickre
+ \sum_{i\in \Vext} 
\bigl< \fieldSt^{\alpha_{i_0}}(i_0) \fieldSt^{\alpha_i}(i) \bigr> \,
\Wickli \fieldSt^{\alpha_{i_0}}(i_0)
\prod_{j\in\Vext\setminus\{i\}} \fieldSt^{\alpha_j}(j) \Wickre \,, 
\end{multline*} 
where $\Vext$ denotes the set of vertices with external lines,
$\set{s(l) : l \in \Eext}$. 

Now since $i_0$ is the latest member of the string segments, we may 
replace the vacuum expectation value by the time-ordered one,
$\langle\T \fieldSt^{\alpha_{i_0}}(i_0)\fieldSt^{\alpha_i}(i)\rangle$.
We arrive at 
$$
T_{n+1} 
= \sum_{\alpha_0,\dots,\alpha_n} \sum_{G'} \prod_{l\in \Eint'}
\bigl< \T \fieldSt^{\alpha_{s(l)}}(s(l))
\fieldSt^{\alpha_{r(l)}}({r(l)}) \bigr> \, 
\Wickli \prod_{l\in\Eext'} \fieldSt^{\alpha_{s(l)}}({s(l)}) \Wickre
$$
where $G'$ runs through all graphs with vertices $\{0,\dots,n\}$,
internal lines $\Eint'$ and external lines $\Eext'$. Now the index
$i_0$ (which depends on the tuple $\underline{\alpha}$) is not
discriminated any more, and we can perform the sum over $\alpha$'s:
\begin{align*}
T_{n+1} 
& = \sum_{G'} \prod_{l\in \Eint'} 
\biggl< \T \sum_{\alpha_{s(l)}} \fieldSt^{\alpha_{s(l)}}({s(l)})
\sum_{\alpha_{r(l)}} \fieldSt^{\alpha_{r(l)}}(r(l)) \biggr> \,
\Wickli \prod_{l\in\Eext'} 
\sum_{\alpha_{s(l)}} \fieldSt^{\alpha_{s(l)}}(s(l)) \Wickre
\\
&= \sum_{G'} \prod_{l\in \Eint'} 
\bigl< \T \fieldSt(s(l)) \fieldSt(r(l)) \bigr> \,
\Wickli \prod_{l\in\Eext'} \fieldSt(s(l)) \Wickre \,.
\end{align*}
This is just the claimed equation~\eqref{eqTWickLinear}. 
\end{proof}

An extension of the time-ordered product across the large
string-diagonal is not yet defined up to this point. To fix it, one
first extends the Feynman propagator across~$\Delta_2$. A basic 
\mbox{(re-)}normalization condition is that the scaling degree may not
be increased. One valid extension consists in replacing
$\delta(p^2 - m^2)\theta(p_0)$ by $i/[2\pi(p^2 - m^2 + i\eps)]$ in the
Fourier transform of the two-point function. The question of whether
other extensions are permitted depends on the scaling degrees of the
Feynman propagator with respect to
the various submanifolds of~$\Delta_2$ and their respective
codimensions. We consider an example in
Appendix~\ref{app:IntegrateT}. In a second step, one can define the
time-ordered product of $n > 2$ strings
by Wick's expansion~\eqref{eqTWickLinear}. This
would amount to requiring Wick's expansion as a further normalization
condition.

\section{Final comments} 
\label{sec:Outlook}

We have constructed (outside $\LargeDiag$) the time-ordered products
of string-localized linear fields, but not of Wick polynomials. The
construction of the latter runs into the following problem. For
simplicity we consider a Wick monomial of the form
\begin{equation} 
\label{eqWickMon} 
W(x,e) \doteq \Wickli \chi(x) \fieldSt(x,e) \Wickre \,,
\end{equation}
where $\chi$ is a point-localized field with non-vanishing two-point
function with $\fieldSt$. (For example, $\chi = \fieldPt$ from
Eq.~\eqref{eqFieldSt}.) We wish to tell just from the requirements
\ref{TLinearity}, \ref{TSymmetry} and \ref{TCausality} what
$\T W(x,e) W(x',e')$ is, if the strings $S \doteq \String{x,e}$ and
$S' \doteq \String{x',e'}$ do not intersect, yet are not comparable. A
typical case is when one string, say $S$, emanates from the causal
future of $S'$ and ends up in its causal past. The best we can do is
to cut $S$ into two pieces $S = S^1 \cup S^2$ as in
Prop.~\ref{StringChopping2} such that $S^1$ is the finite segment
(containing the point $x$) and $S^1 \later S'$, while $S^2$ is the
infinite tail and $S^2 \earlier S'$. ($S^2$ is of the form 
$S^2 = x + [s_0,\infty)\,e$.) Then, as in Eq.~\eqref{eqFieldSum}, $W$
is a sum $W = W^1 + W^2$, where in particular
$$ 
W^2(x,e) = \int_{s_0}^\infty ds\, u(s)\,
\Wickli\chi(x)\,\fieldPt(x + se)\Wickre. 
$$
Now the problem is that $W^2$ is, due to the factor $\chi(x)$,
\emph{not} localized on~$S^2$ -- rather, it is ``bi-localized'' on
$\{x\} \cup S^2$! Note that $S^2$ is earlier than $S'$ but $x$ is
\emph{not}, since it is in $\overline{V_+(S')}$. Therefore, 
$\T W^2(x,e) W(x',e')$ is not fixed by \ref{TCausality}, in particular
it does not factorize as $W(x',e') W^2(x,e)$ even though 
$S' \later S^2$. Similar considerations hold for more general Wick
monomials of the form $\Wickli \chi(x)^l\,\fieldSt^k(x,e) \Wickre$.

We conclude that, in contrast to the linear case, the time-ordered
products of Wick monomials are fixed by the axioms~\ref{TLinearity}
through \ref{TCausality} only outside an open set, namely the set of
pairs of strings which are incomparable. The extension into this set
requires an infinity of parameters: It cannot be fixed by a finite set
of normalization conditions.

We conjecture that this problem can be solved as follows. Recall from
Section~\ref{sec:Intro} that in the construction of interacting models
one has to start from an interaction Lagrangian that differs from some
point-localized Lagrangian by a divergence. For the point-localized
Lagrangian $L^\pt$ there holds the strong form of Wick's expansion
outside the large (point-) diagonal, which fixes the products
$\T L^\pt \cdots L^\pt$ through the Feynman propagators. We conjecture
that the required string-independence condition (equivalence of the
string- and point-localized Lagrangians) implies that the same
expansion holds for the string-localized Lagrangian outside
$\LargeDiag$ (where it is well defined). From here, one would have to
extend the product of Feynman propagators in various steps
across~$\LargeDiag$. In models like massive QED, where the interaction
Lagrangian $j^\mu(x)\ASt_\mu(x,e)$ is linear in the string-localized
field $\ASt_\mu$, we conjecture that the SI condition fixes the
extension outside the large \emph{point} diagonal. The question of
renormalizability then amounts to the question of whether the complete
extension is fixed by a finite number of parameters, which does not
increase with the order~$n$. This is work in progress.

\appendix

\section{Extension of a string-localized Feynman propagator across
  the string-diagonal} 
\label{app:IntegrateT} 

In Prop.~\ref{T2Diag}, we have seen that the time-ordered product
$\T \field \field$ is fixed outside the string diagonal~$\Delta_2$.  
We illustrate here the extension across $\Delta_2$ with a concrete
example, which motivates that one should take the 
string-localized fields as basic building blocks even though they 
have been introduced as
integrals~\eqref{eqFieldSt} over point-localized fields.

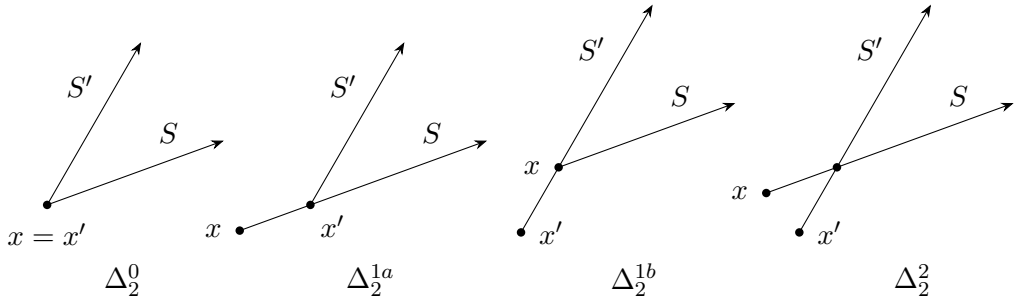
\begin{figure}[htb]
\centering
\begin{tikzpicture}[>=Stealth]
\begin{scope}
\coordinate (O) at (0,0) ; 
\coordinate (D) at (1,-1) ; 
\coordinate (E) at (20:1cm) ; \coordinate (Ep) at (60:1cm) ;
\coordinate (Xa) at ($ (O)!2.0!(E) $) ;
\coordinate (Xp) at ($ (O)!2.5!(E) $) ;
\coordinate (Ya) at ($ (O)!1.5!(Ep) $) ;
\coordinate (Yp) at ($ (O)!2.5!(Ep) $) ;
\draw[->] (O) node[below=3pt] {$x=x'$} -- (Xa) node[above left] {$S$}
   -- (Xp) ; 
\draw[->] (O) -- (Ya) node[above left] {$S'$} -- (Yp) ;
\draw (D) node {$\Delta_2^0$} ; 
\fill (O) circle (1.6pt) ;
\end{scope}
\begin{scope}[xshift=3.5cm]
\coordinate (O) at (0,0) ; 
\coordinate (D) at (0.8,-1) ; 
\coordinate (E) at (20:1cm) ; \coordinate (Ep) at (60:1cm) ;
\coordinate (X) at ($ (O)!-1.0!(E) $) ;
\coordinate (Xa) at ($ (O)!2.0!(E) $) ;
\coordinate (Xp) at ($ (O)!2.5!(E) $) ;
\coordinate (Ya) at ($ (O)!1.5!(Ep) $) ;
\coordinate (Yp) at ($ (O)!2.5!(Ep) $) ;
\draw[->] (X) node[left=3pt] {$x$} -- (Xa) node[above left] {$S$}
   -- (Xp) ; 
\draw[->] (O) node[below right] {$x'$} -- (Ya) node[above left] {$S'$}
   -- (Yp) ;
\draw (D) node {$\Delta_2^{1a}$} ; 
\foreach \pt in {O,X} \fill (\pt) circle (1.6pt) ;
\end{scope}
\begin{scope}[xshift=6.8cm, yshift=5mm]
\coordinate (O) at (0,0) ; 
\coordinate (D) at (1,-1.5) ; 
\coordinate (E) at (20:1cm) ; \coordinate (Ep) at (60:1cm) ;
\coordinate (Xa) at ($ (O)!2.0!(E) $) ;
\coordinate (Xp) at ($ (O)!2.5!(E) $) ;
\coordinate (Y) at ($ (O)!-1.0!(Ep) $) ;
\coordinate (Ya) at ($ (O)!1.5!(Ep) $) ;
\coordinate (Yp) at ($ (O)!2.5!(Ep) $) ;
\draw[->] (O) node[left=3pt] {$x$} -- (Xa) node[above left] {$S$}
   -- (Xp) ; 
\draw[->] (Y) node[right=3pt] {$x'$} -- (Ya) node[above left] {$S'$}
   -- (Yp) ;
\draw (D) node {$\Delta_2^{1b}$} ; 
\foreach \pt in {O,Y} \fill (\pt) circle (1.6pt) ;
\end{scope}
\begin{scope}[xshift=10.5cm, yshift=5mm]
\coordinate (O) at (0,0) ; 
\coordinate (D) at (1,-1.5) ; 
\coordinate (E) at (20:1cm) ; \coordinate (Ep) at (60:1cm) ;
\coordinate (X) at ($ (O)!-1.0!(E) $) ;
\coordinate (Xa) at ($ (O)!2.0!(E) $) ;
\coordinate (Xp) at ($ (O)!2.5!(E) $) ;
\coordinate (Y) at ($ (O)!-1.0!(Ep) $) ;
\coordinate (Ya) at ($ (O)!1.5!(Ep) $) ;
\coordinate (Yp) at ($ (O)!2.5!(Ep) $) ;
\draw[->] (X) node[left=3pt] {$x$} -- (Xa) node[above left] {$S$}
   -- (Xp) ; 
\draw[->] (Y) node[right=3pt] {$x'$} -- (Ya) node[above left] {$S'$}
   -- (Yp) ;
\draw (D) node {$\Delta_2^2$} ; 
\foreach \pt in {O,X,Y} \fill (\pt) circle (1.6pt) ;
\end{scope}
\end{tikzpicture} 
\caption{Configurations in submanifolds of the string diagonal $\Delta_2$}
\label{fg:crossing-lines} 
\end{figure}

The string diagonal $\Delta_2$ decomposes into the following
disjoint submanifolds: 
\begin{align*}  
\Delta_2^0 &= \set{(x,e,x',e') : x = x'} \,,
\\
\Delta_2^{1a}
&= \set{(x,e,x',e') : \exists r > 0 \text{ with } x' = x + re} ,
\\
\Delta_2^{1b} 
&= \set{(x,e,x',e') : \exists r' > 0 \text{ with } x = x' + r'e'},
\qquad   
\\ 
\Delta_2^2 &= \set{(x,e,x',e') : e,e' \text{ lin.~indep.} 
\w \exists r,r' > 0 \text{ with } x + re = x' + r'e'}.
\end{align*}
Here $\Delta_2^0$ consists of the pairs of strings with the same
initial point (the point-diagonal); $\Delta_2^{1a}$ is the set of
configurations where $x'$ lies in the relative interior of the string
$\String{x,e}$, i.e., $x' \in \String{x,e} \setminus \{x\}$; and
$\Delta_2^2$ is the set of pairs of strings whose interiors cross:
see Fig.~\ref{fg:crossing-lines}.
Thus $\Delta_2^0$ is the boundary of either $\Delta_2^{1a}$
or~$\Delta_2^{1b}$, and 
$\Delta_2^1 \doteq \Delta_2^{1a} \cup \Delta_2^{1b}$ is the boundary
of~$\Delta_2^2$.
So one must extend the Feynman propagator successively across
$\Delta_2^2$; then $\Delta_2^{1a}$ and $\Delta_2^{1b}$; and finally
across~$\Delta_2^0$.

As an example we consider massive particles of spin one and take a
line integral over the Proca field $\APt_\mu$, the so-called escort
field \cite{MundOliveira,Bert_BeyondGauge}: 
\begin{equation} 
\label{eqEscort} 
\phi(x,e) \doteq \int_0^\infty ds\, \APt_\mu(x + se)\, e^\mu.
\end{equation}
Its two-point function $\langle \phi(x,e) \phi(x',e')\rangle$ in
momentum space~\cite{MundOliveira} is
$$
\frac{1}{m^2} 
- \frac{e \cdot e'}{(p \cdot e - i\eps)(p \cdot e' + i\eps)} 
$$
times the on-shell delta distribution $\delta(p^2 - m^2) \theta(p_0)$.
It has scaling degree $0$ with respect to $\Delta_2^2$ and 
$\Delta_2^1$, and scaling degree~$2$ with respect to $\Delta_2^2$ due
to the first term.%
\footnote{These scaling degrees are calculated in~\cite{MundSantos}.}
The same holds for the Feynman propagator (outside~$\Delta_2$), and
its extension across $\Delta_2$ may not exceed this (that is the basic
renormalization condition). For all three submanifolds, the
codimension is larger than the respective scaling degree, namely $2$,
$3$ and~$4$, respectively. Therefore the respective extensions are
unique~\cite{SchulzPhD}, and the Feynman propagator is fixed without
any freedom: it is defined by replacing 
$\delta(p^2 - m^2) \theta(p_0)$ by $i/[2\pi(p^2 - m^2 + i\eps)^{-1}]$.

On the other hand, the two-point function of the Proca field in
momentum space is $\bigl( -g_{\mu\nu} + p_\mu p_\nu/m^2 \bigr)$ times
the on-shell delta distribution. Its scaling degree with respect to
the origin is~$4$; hence the Feynman propagator admits a
renormalization of the form
\begin{equation} 
\label{eqTBB} 
c\,g_{\mu\nu} \,\delta(x - x')
\end{equation}
as is well known. So, if one defines the Feynman propagator as the
integral
\begin{equation} 
\label{eqTInt} 
\int_0^\infty ds \int_0^\infty ds' \, 
\langle \T \APt_\mu(x + se) \APt_\nu(x' + s'e') \rangle \,
e^\mu e'^\nu,  
\end{equation}
as one might be tempted to do from Eq.~\eqref{eqEscort}, then by
\eqref{eqTBB} one has the freedom of adding the distribution
$$
c\,e \cdot e' \int_0^\infty ds \int_0^\infty ds'\,
\delta(x + se - x' - s'e), 
$$
supported on $\Delta_2^2$: one has an undetermined constant and
therefore has gained nothing, in contrast to the first approach where
one takes $\phi(x,e)$ as basic building block.

\section{Basic geometric notions} 
\label{app:BasicGeo}

\begin{defn} 
The forward lightcone $V_+$ is the set of all timelike and
future-pointing vectors, namely 
$V_+ \doteq \set{x \in \RR^4 : x^0 > |\xx|}$, and for $z \in \RR^4$ 
we denote
$V_+(z) \doteq V_+ + z = \set{x + z \in \RR^4 :x^0 > |\xx|}$, where
$x^0$ is the time coordinate of $x$ and $\xx$ its spatial
coordinates, i.e., $x \doteq (x^0,\xx)$.

Similarly, the backward lightcone $V_-$ is the set of all timelike and
past-pointing vectors and
$V_-(z) \doteq \set{x + z \in \RR^4 : x^0 < -|\xx|}$. 
\end{defn}

\begin{remk} 
The boundaries of the backward and forward lightcones are given by
$\partial V_-(z) \doteq \set{x + z \in \RR^4 : x_0 = -|\xx|}$ and
$\partial V_+(z) \doteq \set{x + z \in \RR^4 : x_0 = |\xx|}$,
respectively. Furthermore, the closure of the forward lightcone
$V_+(z)$ is denoted $\overline{V_+(z)}$, and similarly for the
backward lightcone.
\end{remk}

\begin{defn} 
For any set $R \subset \RR^4$ the 
causal past and future of~$R$ are defined by 
$V_-(R) \doteq \bigcup_{z\in R} V_-(z)$ and
$V_+(R) \doteq \bigcup_{z\in R} V_+(z)$, respectively.
\end{defn}

We use the following well-known fact that the forward light cone is
self-dual~\cite{ReSiII}.

\begin{lema} 
\label{VSelfdual} 
A vector $\xi \in \RR^4$ is contained in $\overline{V_+}$ if and only
if it satisfies $u \cdot \xi \geq 0$ for all $u \in V_+$.
\end{lema}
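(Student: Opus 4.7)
The plan is to prove the two implications separately, treating the harder direction by contraposition and constructing an explicit witness.

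For the forward implication, I would take $\xi \in \overline{V_+}$ and $u \in V_+$ and estimate $u \cdot \xi = u^0 \xi^0 - \boldsymbol{u} \cdot \boldsymbol{\xi}$ directly. Since $u^0 > |\boldsymbol{u}| \geq 0$ and $\xi^0 \geq |\boldsymbol{\xi}| \geq 0$, multiplying these nonnegative quantities gives $u^0 \xi^0 \geq |\boldsymbol{u}|\,|\boldsymbol{\xi}|$, and the ordinary spatial Cauchy--Schwarz inequality $|\boldsymbol{u} \cdot \boldsymbol{\xi}| \leq |\boldsymbol{u}|\,|\boldsymbol{\xi}|$ then yields $u \cdot \xi \geq 0$. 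This part is essentially a one-line computation and poses no difficulty.

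For the converse, I would argue contrapositively: assume $\xi \notin \overline{V_+}$, i.e., $\xi^0 < |\boldsymbol{\xi}|$, and exhibit some $u \in V_+$ with $u \cdot \xi < 0$. Split into two cases according to whether the spatial part vanishes. If $\boldsymbol{\xi} = 0$, then the hypothesis forces $\xi^0 < 0$, so $u \doteq (1,\boldsymbol{0}) \in V_+$ gives $u \cdot \xi = \xi^0 < 0$. If $\boldsymbol{\xi} \neq 0$, I would try the perturbed candidate $u_\varepsilon \doteq (|\boldsymbol{\xi}| + \varepsilon,\,\boldsymbol{\xi})$ for small $\varepsilon > 0$. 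By construction $u_\varepsilon^0 > |\boldsymbol{u}_\varepsilon| = |\boldsymbol{\xi}|$, so $u_\varepsilon \in V_+$. A direct computation gives
\[
u_\varepsilon \cdot \xi = (|\boldsymbol{\xi}| + \varepsilon)\,\xi^0 - |\boldsymbol{\xi}|^2,
\]
which at $\varepsilon = 0$ equals $|\boldsymbol{\xi}|(\xi^0 - |\boldsymbol{\xi}|) < 0$ by assumption; by continuity the same sign persists for all sufficiently small $\varepsilon > 0$.

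The only mild subtlety is the need to perturb in the case $\boldsymbol{\xi} \neq 0$: the naive choice $(|\boldsymbol{\xi}|, \boldsymbol{\xi})$ sits on $\partial V_+$ rather than in $V_+$, so one must shift slightly into the open cone while preserving the sign of the pairing. This is the only ``step'' that is not entirely mechanical, but it is handled by the explicit $\varepsilon$-construction above. No part of the argument looks delicate, and no appeal to nontrivial convex-analysis machinery is needed beyond the elementary self-duality computation.
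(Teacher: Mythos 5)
Your proof is correct, and it takes a genuinely different route from the paper's. The paper does not compute anything: it invokes the known self-duality of the open forward cone from Reed--Simon (namely that $V_+$ is the interior of $\bigcap_{u\in V_+}\set{\xi : \xi\cdot u>0}$) and then passes to closures to obtain the statement with $\geq$. You instead give a self-contained elementary argument: the forward direction via $u^0\xi^0\geq|\boldsymbol{u}|\,|\boldsymbol{\xi}|$ and the spatial Cauchy--Schwarz inequality, and the converse by contraposition with the explicit witness $u_\eps=(|\boldsymbol{\xi}|+\eps,\boldsymbol{\xi})$, correctly handling the degenerate case $\boldsymbol{\xi}=0$ and the need to push the boundary vector $(|\boldsymbol{\xi}|,\boldsymbol{\xi})$ slightly into the open cone. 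What your approach buys is independence from the cited convexity fact and from the paper's somewhat glossed-over step that the closure of the intersection of open half-spaces equals the intersection of the closed ones (a step whose justification is essentially your contrapositive construction anyway); what the paper's approach buys is brevity, at the cost of outsourcing the substance to a reference.
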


\begin{proof}
According to~\cite{ReSiII}, the open forward light cone coincides with
the interior of the set of all $\xi \in \RR^4$ that satisfy
$u \cdot \xi > 0$ for all $u\in V_+$, that is, with the interior of
the intersection
$$ 
\bigcap_{u\in V_+} \set{\xi : \xi \cdot u > 0}.
$$
But the closure of this set is just
$\set{\xi : \xi \cdot u \geq  0 \text{ for all } u \in V_+}$,
and this proves the claim. 
\end{proof}

\begin{defn} 
Given a set $A \subset \RR^4$, we define its 
\emph{causal complement} $A^c$ to be
$$
A^c \doteq \set{x \in \RR^4 : (x - y)^2 < 0 \text{ for all } y \in A}.
$$
\end{defn}

\begin{defn} 
\label{DoCo}
Let $x,y \in \RR^4$ be such that $y \in V_+(x)$. Then, we define the
open \emph{double cone} $D(y,x)$ with $x$ and $y$ as apices by
$$
D(y,x) \doteq V_+(x) \cap V_-(y).
$$
\end{defn}

The $2$-planes in Minkowski space have a classification similar to
that of the $3$-planes (see \cite{ThomasWichmann}, for instance). A
$2$-plane is \emph{spacelike} if all its nonzero vectors are
spacelike; it is \emph{timelike} if it contains a nonzero timelike
vector; it is \emph{lightlike} if it contains a line of lightlike
vectors but no timelike vectors.

Let $e,e'$ be spacelike unit vectors, that is, 
$e \cdot e = -1 = e' \cdot e'$.

\begin{lema} 
\label{Spanee'} 
\begin{enumerate}
\item[\textup{(i)}]
The linear span of $e,e'$ is timelike, lightlike or spacelike,
respectively, if and only if $(e \cdot e')^2 - 1$ is positive, zero or
negative, respectively.
\item[\textup{(ii)}]
This linear span is timelike if and only if one of the vectors 
$e \pm e'$ is timelike and the other is spacelike;
it is lightlike if and only if 
one of the vectors $e \pm e'$ is lightlike; 
it is spacelike if and only if the vectors $e \pm e'$ are either both
timelike or both spacelike.
\end{enumerate}
\end{lema}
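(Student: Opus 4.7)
The plan is to reduce both assertions to a direct computation of the Minkowski norm on the 2-plane spanned by $e$ and $e'$. I would assume $e,e'$ are linearly independent (the degenerate case $e=\pm e'$ gives a spacelike line and is tacitly outside the scope of the classification), and write every vector in the span as $\xi = \alpha e + \beta e'$, with squared length
\begin{equation*}
\xi\cdot\xi = -\alpha^2 + 2(e\cdot e')\,\alpha\beta - \beta^2.
\end{equation*}
This is a quadratic form on $\RR^2$ whose symmetric matrix is the Gram matrix of $\{e,e'\}$, with trace $-2$ and determinant $1-(e\cdot e')^2$; its eigenvalues are $-1\pm|e\cdot e'|$.

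For part (i), I would read off the signature of this Gram form from the eigenvalues: it is negative definite (so the plane is spacelike) precisely when both eigenvalues are negative, i.e.\ $|e\cdot e'|<1$; it is negative semidefinite with a one-dimensional kernel (so the plane is lightlike in the sense of containing a null line but no timelike vector) precisely when $|e\cdot e'|=1$; and it is indefinite of signature $(+,-)$ (so the plane is timelike) precisely when $|e\cdot e'|>1$. This is exactly the trichotomy on the sign of $(e\cdot e')^2-1$ claimed in (i).

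For part (ii), the crucial direct computation is
\begin{equation*}
(e+e')\cdot(e+e') = -2\bigl(1-e\cdot e'\bigr), \qquad
(e-e')\cdot(e-e') = -2\bigl(1+e\cdot e'\bigr).
\end{equation*}
Their sum equals $-4$, so the two squared norms cannot both be nonnegative; their product equals $4\bigl(1-(e\cdot e')^2\bigr)$. A short case analysis then finishes the job: when $(e\cdot e')^2>1$ the product is negative, so the two squared norms have opposite signs and one of $e\pm e'$ is timelike while the other is spacelike; when $(e\cdot e')^2=1$ the product vanishes and the sum is strictly negative, so exactly one of $e\pm e'$ is lightlike; when $(e\cdot e')^2<1$ both squared norms are strictly negative, so both of $e\pm e'$ are spacelike (and the ``both timelike'' option in the statement is actually vacuous, consistent with the sum being $-4$). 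Combining these three alternatives with part~(i) yields all three equivalences. The computations are completely elementary; no significant obstacle arises, and the only step requiring any care is the silent exclusion of the degenerate case $e=\pm e'$ and checking that the three subcases of the sign of the product are indeed exhaustive and mutually exclusive.
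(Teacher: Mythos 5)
Your proof is correct and follows essentially the same route as the paper: part (i) from the quadratic form $(\alpha e+\beta e')^2=-(\alpha^2-2\alpha\beta\,e\cdot e'+\beta^2)$, and part (ii) from the computation $(e\mp e')^2=-2(1\pm e\cdot e')$ together with the factorization $(e\cdot e')^2-1=-\tfrac14(e-e')^2(e+e')^2$. Your extra observations (the eigenvalues $-1\pm|e\cdot e'|$, the sum of the two squared norms being $-4$, and the explicit exclusion of $e=\pm e'$) only make explicit what the paper leaves implicit.
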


\begin{proof}
The first statement is readily verified from the equation
$$
(se + te')^2 = - (s^2 - 2st\,e\cdot e' + t^2).
$$
Now note that $(e \mp e')^2 = -2(1 \pm e \cdot e')$, hence 
(i) implies~(ii), on account of
$$
(e \cdot e')^2 - 1 \equiv (e \cdot e' + 1)(e \cdot e' - 1)
= - \frac{1}{4}(e - e')^2 \,(e + e')^2.
\eqno \qed 
$$
\hideqed
\end{proof}

\paragraph{Acknowledgements.}
This research was generously supported by the program ``Research in
Pairs'' of the Mathematisches Forschungsinstitut at Oberwolfach in
November 2015. JM and JCV are grateful to Jos\'e~M. Gracia-Bond\'ia
for helpful discussions, at Oberwolfach and later.
We thank the referee for pertinent comments, which helped to 
fine-tune the paper.
JM and LC have received financial support by the Brasilian research
agencies CNPq and CAPES, respectively. 
They are also grateful to Finep.
The project has received funding from the European Union's
Horizon~2020 research and innovation programme under the Marie
Sk{\l}odowska-Curie grant agreement No.~690575.
JCV acknowledges support from the Vicerrector\'ia de
Investigaci\'on of the Universidad de Costa~Rica.

\providecommand{\bysame}{\leavevmode\hbox to 3em{\hrulefill}\thinspace}

\end{document}